\documentclass[letterpaper, 10 pt, journal]{IEEEtran}
\usepackage{amsmath,amsfonts}
\usepackage{algorithmic}
\usepackage{algorithm}
\usepackage{array}

\usepackage[caption=false,font=footnotesize]{subfig}

\usepackage{textcomp}
\usepackage{stfloats}
\usepackage{url}
\usepackage{booktabs}
\usepackage{verbatim}
\usepackage{graphicx}
\usepackage{cite}
\usepackage{amssymb}
\usepackage{amsthm}
\usepackage{amsmath}
\usepackage{mathtools}

\newtheorem{theorem}{Theorem}

\newtheorem{assumption}{Assumption}

\newtheorem{remark}{Remark}

\begin{document}
\title{A Barrier-Modulated Architecture for Safe Affine Formation Control in Second-Order Multi-Agent Systems}

\author{Ashik Abrar Naeem and Mohammad Ariful Haque}

\maketitle

\begin{abstract}
Affine formation control offers immense flexibility for coordinating multi-agent maneuvers, but guaranteeing the safety of agents under parametric uncertainties remains an open challenge. This paper proposes a novel safe affine formation control framework for second-order multi-agent systems by integrating Higher-Order Control Barrier Functions (HOCBFs) with Adaptive Dynamic Programming (ADP). We introduce a barrier-modulated control architecture that smoothly attenuates the nominal formation tracking objective when agents approach safety boundaries, preventing conflicting control inputs. Within this architecture, two distinct safety controllers are developed: (1) an analytical barrier-gradient repulsive controller that provides a computationally efficient, rigorous mathematical baseline, and (2) a data-driven optimal safety controller. The data-driven approach utilizes an actor-critic neural network to solve the Hamilton-Jacobi-Bellman (HJB) equation online, enabling optimal collision avoidance even in the presence of unknown system parameters. Using Nagumo's theorem and Lyapunov stability analysis, we formally prove that both controllers guarantee the forward invariance of the safe set ensuring absolute collision avoidance while maintaining Uniformly Ultimately Bounded (UUB) formation tracking errors. Finally, simulations validate the theoretical findings and demonstrate the robustness of the proposed controllers in dynamic obstacle avoidance scenarios.
\end{abstract}

\begin{IEEEkeywords}
Multi-Agent Systems, Affine Formation Control, Control Barrier Functions, Adaptive Dynamic Programming, Collision Avoidance, Actor-Critic Neural Networks.
\end{IEEEkeywords}

\section{Introduction}
\IEEEPARstart{M}{ulti-agent} formation control has been studied extensively since 1990, when Sugihara and Suzuki \cite{sugihara1990distributed} proposed a simple algorithm for a group of point-mass robots to approximate circles and simple polygons. As the discipline matured, it evolved to encompass a variety of complex agent dynamics and communication topologies. The necessary and sufficient graphical conditions for formation stabilization was established in \cite{lin2005necessary}, while consensus algorithms \cite{ren2008consensus} and distributed containment control strategies \cite{cao2010distributed} significantly improved the coordination of double integrator dynamics. Complex motion constraints were integrated \cite{zhao2017general} into these theoretical models to enhance practical deployment. More Recently, deep reinforcement learning has been applied to adaptive cooperative trajectory planning \cite{xing2024multi}. A critical aspect of multi-agent coordination is how the target geometric pattern is defined and manipulated. While many traditional approaches rely on relative positions or distances, breakthroughs in bearing rigidity have enabled almost global bearing-only formation stabilization \cite{zhao2015bearing}. Building on this, bearing-based approaches have been successfully utilized for translational and scaling formation maneuver control \cite{zhao2015translational}. However, affine formation control has emerged as a powerful framework to achieve even more flexible and complex geometric maneuvers, such as shearing, rotation, and scaling.

Affine transformation is a general linear transformation that may correspond to a translation, rotation, scaling, shear, or compositions of them. Affine transformation preserves straight lines and planes. As a result, collinear (or coplanar) points remain collinear (or coplanar) after any affine transformations. Parallel lines are also preserved by affine transformations. Affine formation control laws preserve stability when the nominal configuration undergoes arbitrary affine transformations. Lin et al. \cite{lin2015necessary} first proposed affine formation control as a distributed multi-agent framework and established necessary and sufficient graphical conditions for both undirected and directed interaction graphs. Later, Zhao \cite{zhao2018affine} introduced the notion of affine localizability and proposed distributed affine formation maneuver control laws for single and double-integrator agent dynamics along with non-holonomic motion and velocity saturation constraints. \cite{zhi2021leader} further progressed the control laws for second-order nonlinear uncertain systems. The problem of constructing affine frameworks in three scenarios e.g. vertex addition, edge deletion, and vertex deletion is analyzed in \cite{li2025distributed}. These works on affine formation frameworks neglect the safety of the maneuvering agents. Recently, \cite{liu2025flexible} utilized deep reinforcement learning for collision avoidance. But in their approach, the followers need to know the global position of the leaders which violates the distributed affine formation framework. Further analysis suggests that reinforcement learning is inherently uncertain. So, formal safety ensurance in affine formation control is still an open problem.

Intuitively, safety requires that “bad” things do not happen. In the context of dynamical systems, safety is studied since 1940's, when Nagumo provided necessary and sufficient conditions for set invariance \cite{nagumo1942lage}. But a more formal study of safety started from 2014 in the context of control barrier functions (CBFs) \cite{ames2014control}. Since then several forms of control barrier functions have proposed \cite{ames2014control, romdlony2016stabilization, ames2019control, nguyen2016exponential, xiao2021high, breeden2023robust, li2023survey}. While quadratic programming (QP) is a common method of utilizing CBFs \cite{emam2022safe, zhang2025output}, standard formulations can induce undesired, non-origin equilibrium points in the closed-loop system \cite{tan2024undesired}. Recently, adaptive dynamic programming (ADP) has gained significant attention due to the ability of neural networks to approximate the HJB and HJI equations.

Reinforcement learning (RL) has been an effective tool in solving nonlinear optimization problems \cite{vrabie2014optimal}. A well-known architecture used in RL is the actor-critic structure. Adaptive dynamic programming \cite{liu2017adaptive} also takes the actor-critic structure as an implementation architecture and shares similar spirits as RL. Thus researchers often use ADP and RL as two interchangeable terms. Critic-only architectures \cite{yang2020optimal, qiao2022asymmetric, yang2018adaptive} have become increasingly popular due to their ability to solve the HJB and HJI equations using a single neural network. However, these methods inherently require complete knowledge of the system dynamics. In scenarios characterized by uncertainty, actor-critic structures \cite{kamalapurkar2015approximate, abu2005nearly, vamvoudakis2010online} offer a highly effective alternative for robust approximation. The synthesis of CBFs and ADP constitutes a nascent area of study that is presently attracting significant research interest. Marvi et. al. \cite{marvi2021safe} presented a actor-critic architecture-based barrier certified method to learn safe optimal controllers and established formal proofs. While \cite{zhang2024safe} designed a safe and optimal robust control method to ensure the system operated in a safe region under the influence of external disturbances. Wang et. al. \cite{wang2026reinforcement} treated external disturbances adversarial agent and transformed the safe optimal formation tracking control into a constrained zero-sum (ZS) differential game and designed and integrated a CBF into the cost function to penalize the unsafe behavior.

This study integrates control barrier functions with slide mode controller to design safe affine formation control for second-order multi-agent systems. Specifically, by separating the nominal formation task from the collision avoidance mechanism, this work enable agents to dynamically prioritize safety in uncertain environments without permanently sacrificing the target geometric pattern. 

The main contributions of this paper are summarized as follows:
\begin{itemize}
    \item \textbf{Barrier-Modulated Control Architecture:} We propose a novel framework that integrates a nominal affine formation tracking controller with safety mechanisms. This architecture utilizes a smooth barrier-modulation function that attenuates the nominal controller only near safety boundaries to prevent collision among the agent and ensuring maneuverability.
    \item \textbf{Analytical Barrier-Gradient Safety Controller:} We develop a computationally efficient safety controller based on spatial barrier-gradients. This approach provides a rigorous analytical baseline that formally guarantees the forward invariance of the safe set and uniformly ultimately bounded tracking errors, offering a highly reliable collision avoidance mechanism.
    \item \textbf{Optimal Safety Control via ADP:} We design a data-driven safety controller using an actor-critic neural network structure to solve the Hamilton-Jacobi-Bellman (HJB) equation online. By defining a danger-weighted state representation, the network learns an optimal safety policy that dynamically adapts to parametric uncertainties without requiring exact knowledge of the system dynamics.
    \item \textbf{Rigorous Safety and Stability Guarantees:} We provide formal proofs establishing the forward invariance of the safe set via Nagumo's theorem. Furthermore, rigorous Lyapunov stability analysis is conducted to prove the Uniform Ultimate Boundedness (UUB) of the tracking errors and neural network weight estimation errors.
\end{itemize}

The remainder of this paper is organized as follows: Section II introduces the theoretical preliminaries of affine formation and formulates the control problem. Section III outlines the barrier-modulated control architecture. Section IV presents the analytical barrier-gradient safety controller, while Section V details the ADP-based safety controller. Simulation results validating both approaches are discussed in Section VI. Finally, Section VII concludes the paper.

\section{Preliminaries and Problem Formulation}

This section is composed of two parts. First the concept of affine formation is introduced, and then the problem is formulated.

\subsection{Affine Formation}

The formation for a group of $n$ agents in $\mathbb{R}^d$ is denoted as $(\mathcal{G}, p)$, where the graph $\mathcal{G} = (\mathcal{V}, \mathcal{E})$ consists of a vertex set $\mathcal{V} = \{1, \dots, n\}$ and an edge set $\mathcal{E} \subseteq \mathcal{V} \times \mathcal{V}$. The edge $(i,j) \in \mathcal{E}$ indicates that agent $i$ can receive information from agent $j$, and agent $j$ is a neighbor of $i$. The set of neighbors of vertex $i$ is $\mathcal{N}_i = \{j \in \mathcal{V} : (i,j) \in \mathcal{E}\}$. This paper only considers undirected graphs, where $(i,j) \in \mathcal{E} \Leftrightarrow (j,i) \in \mathcal{E}$. Let $p_i \in \mathbb{R}^d$ be the position of agent $i$ and $p = [p_1^T, \dots, p_n^T]^T \in \mathbb{R}^{dn}$ be the configuration of all the agents.  Suppose the first $n_\ell$ agents are leaders and the rest $n_f = n - n_\ell$ agents are followers. Let $\mathcal{V}_\ell = \{1, \dots, n_\ell\}$ and $\mathcal{V}_f = \mathcal{V} \setminus \mathcal{V}_\ell$ be the sets of leaders and followers, respectively. The positions of the leaders and followers are denoted as $p_\ell = [p_1^T, \dots, p_{n_\ell}^T]^T$ and $p_f = [p_{n_\ell+1}^T, \dots, p_n^T]^T$, respectively. The constant configuration $r$ represents a typical geometric pattern that the formation would like to maintain. Here, $r$ is called the nominal configuration and $(\mathcal{G},r)$ the nominal formation. The notion along with the necessary and sufficient conditions of \textit{affine formation control} was first proposed in \cite{lin2015necessary}. The \textit{affine image}, the set consisting of all the affine transformations of the nominal configuration $r$, defined as
\begin{equation*}
    \begin{split}
        \mathcal{A}(r) &= \{p \in \mathbb{R}^{dn} : p = (I_n \otimes A)r + \mathbf{1}_n \otimes b, \\
        &\qquad A \in \mathbb{R}^{d \times d}, \; b \in \mathbb{R}^d \}.
    \end{split}
\end{equation*}

For formation $(\mathcal{G}, p)$, a stress is a set of scalars, $\{\omega_{ij}\}_{(i,j) \in \mathcal{E}}$, where $\omega_{ij} = \omega_{ji} \in \mathbb{R}$, assigned to all the edges. A stress is called an equilibrium stress \cite{connelly2005generic, alfakih2008dual, connelly2015frameworks}, if it satisfies
\begin{equation}
\sum_{j \in \mathcal{N}_i} \omega_{ij}(p_j - p_i) = 0, \quad i \in \mathcal{V}. \label{eq:equilibrium_stress}
\end{equation}
Equation \eqref{eq:equilibrium_stress} can be expressed in a matrix form as
$$(\Omega \otimes I_d)p = 0$$
where $\Omega \in \mathbb{R}^{n \times n}$ is the stress matrix satisfying
$$
[\Omega]_{ij} = 
\begin{cases}
0, & i \neq j, (i,j) \notin \mathcal{E} \\
-\omega_{ij}, & i \neq j, (i,j) \in \mathcal{E} \\
\sum_{k \in \mathcal{N}_i} \omega_{ik}, & i = j.
\end{cases}
$$ Every configuration in the affine image satisfies the equilibrium stress condition \eqref{eq:equilibrium_stress} if the nominal configuration $\{r_i\}_{i=1}^n$ affinely spans $\mathbb{R}^d$. The necessary and sufficient conditions to establish this statement were first analyzed in terms of linear span in \cite{lin2015necessary}, and affine span in \cite{zhao2018affine}. The notion of \textit{affine localizability} along with its necessary and sufficient conditions established by \cite{zhao2018affine}. The nominal formation $(\mathcal{G}, r)$ is affinely localizable by the leaders if for any $p = [p_\ell^T, p_f^T]^T \in \mathcal{A}(r)$, $p_f$ can be uniquely determined by $p_\ell$.

\begin{assumption}
We assume that
\begin{enumerate}
    \item The nominal configuration $\{r_i\}_{i=1}^n$ affinely spans $\mathbb{R}^d$.
    \item The nominal formation $(\mathcal{G}, r)$ possesses a positive semi-definite stress matrix $\Omega$ satisfying $\mathrm{rank}(\Omega) = n - d - 1$.
    \item The nominal formation $(\mathcal{G}, r)$ is affinely localizable by the leaders.
    \item The leaders' trajectories are governed by an external mechanism ensuring $p_l(t) = p^*_l(t)$ for all $t$.
\end{enumerate}
\label{as:preliminary}
\end{assumption}

\subsection{Problem Formulation}

Assume that all followers are governed by the following nonlinear dynamics
\begin{equation}
\begin{aligned}
    \dot{p}_i &= v_i \\
    \dot{v}_i &= u_i + \Phi_i \theta_i, \quad i = 1, \dots, n_f
\end{aligned}
\label{eq:dynamics}
\end{equation}
where $p_i \in \mathbb{R}^d$ denotes agent $i$'s position, $v_i \in \mathbb{R}^d$ denotes agent $i$'s velocity, $u_i \in \mathbb{R}^d$ is the input of agent $i$, and $\Phi_i(p_i,v_i) \in \mathbb{R}^{d \times q}$ is a matrix whose elements are bounded and continuous functions of $p_i$ and $v_i$. The term $\theta_i \in \mathbb{R}^q$ is the unknown and constant parameter vector of agent $i$, e.g., the mass or the fluid density in the environment.

The objective is to steer the followers to track a time-varying target formation maintaining a minimum safe distance from other agents. The time-varying configuration of the target formation has the form of
$$p^*(t) = [I_n \otimes A(t)]r + \mathbf{1}_n \otimes b(t)$$
where $r = [r_1^T, \dots, r_n^T]^T = [r_\ell^T, r_f^T]^T \in \mathbb{R}^{dn}$ is the nominal configuration, and $A(t) \in \mathbb{R}^{d \times d}$ and $b(t) \in \mathbb{R}^d$ are continuous functions of $t$. The target configuration is a time-varying affine transformation of the nominal configuration.

\section{Barrier-Modulated Control Architecture}
To achieve simultaneous formation tracking and collision avoidance a barrier-modulated control architecture is proposed. The control input for follower $i$ is
\begin{equation}
    u_i = \rho_i \, u_{\mathrm{nom},i} + u_{\mathrm{safe},i}
    \label{eq:control_input}
\end{equation}
where $u_{\mathrm{nom},i}$ is the nominal formation tracking controller, $u_{\mathrm{safe},i}$ is the safety controller, and $\rho_i \in [0,1]$ is a smooth barrier-modulation function that attenuates the nominal controller near the safety boundary. Two designs for $u_{\mathrm{safe},i}$ are proposed: an analytical barrier-gradient controller (Section~\ref{sec:barrier}) and an adaptive dynamic programming (ADP) controller (Section~\ref{sec:adp}).

\subsection{Sensor Model}

Each follower $i$ communicates with its graph neighbors $\mathcal{N}_i$ for formation control. Additionally, an onboard sensor detects all agents within a local sensing radius $R_{\rm sense}$. Let
\begin{equation}
    \mathcal{S}_i(t) = \{j \in \mathcal{V} \setminus \{i\} : \|p_i(t) - p_j(t)\| < R_{\rm sense}\}
    \label{eq:sensor_set}
\end{equation}
be the set of agents detected by the sensor of follower $i$ at time $t$. All safety-related quantities such as the modulation function $\rho_i$, the barrier penalty $B_i$, and the safety controller $u_{\mathrm{safe},i}$ are computed over $\mathcal{S}_i(t)$. This ensures that potential collisions with non-neighbor agents are also detected and avoided.

\subsection{Safety Functions}

Let $D_s$ denote the minimum allowable safe distance. The zeroth-order safety constraint is
\begin{equation}
    h_0(x_i, x_j) = \|p_i - p_j\|^2 - D_s^2 \geq 0.
    \label{eq:h0_constraint}
\end{equation}

Because $u_i$ acts on the acceleration, a Higher-Order Control Barrier Function (HOCBF) is employed. The modified safety function is
\begin{equation}
    h_{\mathrm{safe}}(x_i, x_j) = 2(p_i - p_j)^T(v_i - v_j) + \gamma \left( \|p_i - p_j\|^2 - D_s^2 \right),
    \label{eq:h_safe_final}
\end{equation}
where $\gamma > 0$ is a design parameter. The barrier-modulation function is
\begin{equation}
    \rho_i = 1 - \exp\!\Big(-\beta \max\big(0,\; \min_{j \in \mathcal{S}_i} h_0(x_i, x_j)\big)\Big),
    \label{eq:rho_def}
\end{equation}
where $\beta > 0$ controls the steepness. When all sensed agents are far ($h_0 \gg 0$), $\rho_i \approx 1$ and the nominal controller operates at full capacity. As any agent approaches the safety boundary ($h_0 \to 0$), $\rho_i \to 0$, attenuating the nominal controller. The barrier penalty aggregating all pairwise safety costs is
\begin{equation}
    B_i(x_i) = \sum_{j \in \mathcal{S}_i} \frac{\mu}{\max(h_{\mathrm{safe}},\, \epsilon_b)},
    \label{eq:barrier_penalty}
\end{equation}
where $\mu > 0$ is a weighting parameter and $\epsilon_b > 0$ prevents division by zero.

\subsection{Nominal Controller}
The nominal controller is borrowed from \cite{zhi2021leader} and formulated by the following algorithm

\begin{align}
    u_i &= -s_i - \hat{\gamma}_i \circ \operatorname{sgn}(s_i) - \Phi_i \hat{\theta}_i \label{eq:control_u} \\
    \dot{\hat{\gamma}}_i &= c_1 \operatorname{abs}(s_i) \label{eq:gamma_dot} \\
    \dot{\hat{\theta}}_i &= c_2 \Phi_i^T s_i \label{eq:theta_dot}
\end{align}
where $s_i = \sum_{j=1}^{m+n} \omega_{ij} (p_i - p_j) + a \sum_{j=1}^{m+n} \omega_{ij} (v_i - v_j)$, $\hat{\theta}_i \in \mathbb{R}^q$, $\hat{\gamma}_i \in \mathbb{R}^d$ and $a, c_1, c_2 \in \mathbb{R}$ are all positive constant gains.

\begin{theorem}[\cite{zhi2021leader}]
Considering the system \eqref{eq:dynamics} suppose that Assumption \ref{as:preliminary} holds, the acceleration of the leaders is bounded, and the velocity of the leaders is continuous. If
\begin{equation}
    \lambda_{\min}(\bar{\Omega}_{ff}) > a^{-3}
    \label{eq:lambda_condition}
\end{equation}
is satisfied, then the control law \eqref{eq:control_u}-\eqref{eq:theta_dot} can guarantee that the followers will converge to their desired positions.
\label{th:nominal_controller}
\end{theorem}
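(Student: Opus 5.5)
The plan is to work directly with the follower tracking error and an adaptive Lyapunov function. Partition the stress matrix as $\Omega = \begin{bmatrix} \Omega_{\ell\ell} & \Omega_{\ell f} \\ \Omega_{f\ell} & \Omega_{ff}\end{bmatrix}$ and write $\bar{\Omega}_{ff} = \Omega_{ff}\otimes I_d$. By Assumption~\ref{as:preliminary} and affine localizability, $\bar{\Omega}_{ff}$ is positive definite. The target follower configuration is therefore $p_f^* = -\bar{\Omega}_{ff}^{-1}\bar{\Omega}_{f\ell}\,p_\ell$, because every $p^*\in\mathcal{A}(r)$ satisfies $(\Omega\otimes I_d)p^*=0$.

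Define the error $\delta_p = p_f - p_f^*$ and $\delta_v = v_f - v_f^*$. The stacked sliding variable then becomes
\[
s = \bar{\Omega}_{ff}(\delta_p + a\,\delta_v),
\]
since the leader terms cancel against $p_f^*$.

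Differentiating $s$ along \eqref{eq:dynamics} under \eqref{eq:control_u} gives
\[
\dot s = \bar{\Omega}_{ff}\delta_v + a\bar{\Omega}_{ff}\big(-s - \hat{\gamma}\circ\mathrm{sgn}(s) + \Phi\tilde{\theta} - \dot v_f^*\big),
\]
with $\tilde{\theta} = \theta - \hat{\theta}$. The term $\dot v_f^* = -\bar{\Omega}_{ff}^{-1}\bar{\Omega}_{f\ell}\ddot p_\ell$ is bounded because the leader accelerations are bounded. I would treat it as a lumped disturbance $d$ with unknown componentwise bound $\gamma^*$, which is what the adaptive gain $\hat{\gamma}$ is meant to dominate.

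For the Lyapunov candidate I would take
\[
V = \tfrac12 \delta_p^T\bar{\Omega}_{ff}\delta_p + \tfrac{a}{2}\,\cdots + \tfrac{1}{2c_1}\|\gamma^*-\hat\gamma\|^2 + \tfrac{1}{2c_2}\|\tilde\theta\|^2,
\]
or equivalently a combination of $\frac12 s^T\bar{\Omega}_{ff}^{-1}s$ and a quadratic in $\delta_p$. The cross term between $\delta_p$ and $\delta_v$ is where the gain condition enters. Writing $\dot\delta_p = \delta_v = a^{-1}(\bar{\Omega}_{ff}^{-1}s - \delta_p)$ shows that $\delta_p$ is a stable filter of $s$. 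The steps are then as follows:
\begin{enumerate}
\item Use the updates \eqref{eq:gamma_dot}--\eqref{eq:theta_dot} to cancel the $\tilde\theta$ terms and the $\hat\gamma$ terms exactly, since $\Phi^T s$ and $\mathrm{abs}(s)$ match.
\item Bound $s^T d \le \sum |s_k|\gamma^*_k$, so that the disturbance is absorbed.
\item Combine the remaining quadratic form in $(s,\delta_p)$ and show it is negative definite via a Schur complement argument. The condition $\lambda_{\min}(\bar{\Omega}_{ff}) > a^{-3}$ should emerge from requiring the $\bar{\Omega}_{ff}\delta_v$ coupling term, scaled by $a$ and $a^{-1}$ factors, to be dominated by $a\lambda_{\min}(\bar{\Omega}_{ff})\|s\|^2$.
\end{enumerate}
This yields $\dot V \le -W(s,\delta_p)\le 0$. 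Boundedness of all signals follows, and $\dot s$ is bounded because the leader velocity is continuous and $\Phi$ is bounded. Barbalat's lemma then gives $s\to0$ and $\delta_p\to0$, that is, $p_f\to p_f^*$.

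I expect the main obstacle to be the choice of Lyapunov weights in step 3 that recovers exactly the threshold $a^{-3}$. My first attempt would be
\[
V_0 = \tfrac{1}{2a}\,s^T\bar{\Omega}_{ff}^{-1}s + \tfrac{\kappa}{2}\|\delta_p\|^2.
\]
I would tune $\kappa$ and apply Young's inequality to the cross term $s^T\delta_v$, checking whether positivity of the resulting $2\times2$ scalar matrix reduces to $\lambda_{\min}^{}a^3>1$. If it does not, I would accept a sufficient condition of the same form.
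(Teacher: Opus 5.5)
The paper gives no proof of this statement; it is quoted from \cite{zhi2021leader}. The only in-paper version of the argument, in the proof of Theorem~\ref{thm:uub_barrier}, has the same skeleton as yours. Your steps 1 and 2 and the Barbalat conclusion are sound once step 3 holds.

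The gap is in step 3, the one place where the hypothesis $\lambda_{\min}(\bar{\Omega}_{ff})>a^{-3}$ is supposed to enter, and it does not come out there. With $V=\frac{1}{2a}s^T\bar{\Omega}_{ff}^{-1}s+g(\delta_p)$ plus the adaptive terms, steps 1--2 leave
\[
\dot V\le -s^T\big(I-a^{-2}\bar{\Omega}_{ff}^{-1}\big)s-a^{-2}s^T\delta_p+a^{-1}\nabla g(\delta_p)^T\big(\bar{\Omega}_{ff}^{-1}s-\delta_p\big).
\]
The coupling $a^{-1}s^T\delta_v$ contributes $a^{-2}s^T\bar{\Omega}_{ff}^{-1}s$. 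It is opposed by a dissipation of $\|s\|^2$, not $a\lambda_{\min}(\bar{\Omega}_{ff})\|s\|^2$ as you expected. The weight on $\delta_p$ adds nothing quadratic in $s$. Hence, for every $\kappa$, the $s$--$s$ block of your Schur-complement matrix is $-(I-a^{-2}\bar{\Omega}_{ff}^{-1})$, and negative definiteness needs $a^2\lambda_{\min}(\bar{\Omega}_{ff})>1$.

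This is structural rather than a tuning issue. Step 1 forces $\partial V/\partial\delta_v=s$. By the Kalman--Yakubovich--Popov lemma, a quadratic $V$ with that property and a negative-definite residual form exists iff, for each eigenvalue $\lambda$ of $\bar{\Omega}_{ff}$, the modal map $\lambda(1+a\zeta)/(\zeta^2+a\lambda\zeta+\lambda)$ from $\Phi\tilde{\theta}$ to $s$ is strictly positive real ($\zeta$ the Laplace variable). Its real part at $\zeta=\mathrm{i}\omega$ is proportional to $\lambda+\omega^2(a^2\lambda-1)$, so strict positive realness requires $a^2\lambda>1$. For $a>1$ one has $a^{-3}<a^{-2}$, so the range $a^{-3}<\lambda_{\min}(\bar{\Omega}_{ff})\le a^{-2}$ allowed by the theorem is not covered. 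Your fallback of accepting ``a sufficient condition of the same form'' proves a different statement there.

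Here is what your route does deliver cleanly. Take $g(\delta_p)=\frac{1}{2a}\delta_p^T\bar{\Omega}_{ff}\delta_p$, which is your first candidate rescaled to match the $\frac{1}{2a}$ on $s$. The cross terms then cancel exactly, and
\[
\dot V\le -s^T\big(I-a^{-2}\bar{\Omega}_{ff}^{-1}\big)s-a^{-2}\delta_p^T\bar{\Omega}_{ff}\delta_p
\]
with no Young's inequality needed.

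This is simpler than your scalar-$\kappa$ candidate $V_0$. There the cross term $s^T(\frac{\kappa}{a}\bar{\Omega}_{ff}^{-1}-a^{-2}I)\delta_p$ survives, and handling it by Young's inequality as you plan brings the eigenvalue spread of $\bar{\Omega}_{ff}$ into the condition. It is also sharper than the paper's weight $\frac{1}{2a^2}e_f^T\bar{\Omega}_{ff}^{-1}e_f$, which leaves the cross term $s_f^TMe_f$ and is closed only by taking $a$ ``sufficiently large''.

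This choice proves the theorem as stated when $a\le1$, since then $a^{-3}\ge a^{-2}$. For $a>1$ it proves it only under $\lambda_{\min}(\bar{\Omega}_{ff})>a^{-2}$. Reaching the stated $a^{-3}$ threshold for $a>1$ needs an ingredient beyond this quadratic-form argument, and neither your proposal nor the paper supplies one.
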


\section{Safety Controller I: Barrier-Gradient Repulsion}
\label{sec:barrier}

The first safety controller is an analytical barrier-gradient repulsive force computed over the sensor set
\begin{equation}
    u_{\mathrm{safe},i} = \sum_{j \in \mathcal{S}_i} \frac{2\kappa\mu(p_i - p_j)}{\max(h_{\mathrm{safe}},\, 0)^2 + \epsilon^2},
    \label{eq:u_repel}
\end{equation}
where $\kappa > 0$ is the repulsion gain and $\epsilon > 0$ is a regularization parameter. This force pushes agent $i$ away from each sensed agent $j$, with magnitude that grows as $h_{\mathrm{safe}} \to 0$. The step-by-step execution of this barrier-gradient safety mechanism is detailed in Algorithm~\ref{alg:barrier_gradient}. With the control architecture fully defined, we now establish the rigorous safety guarantees and closed-loop stability properties of the system.

\begin{algorithm}[htbp]
	\caption{Barrier-Gradient Repulsion Controller for Follower $i$}
	\label{alg:barrier_gradient}
	\begin{algorithmic}[1] 
		\REQUIRE Current state $x_i(t) = (p_i, v_i)$, nominal control $u_{\mathrm{nom},i}$, safety distance $D_s$, sensor radius $R_{\rm sense}$, and tuning parameters $\gamma, \beta, \kappa, \mu, \epsilon > 0$.
		\ENSURE Total safety-modulated control input $u_i(t)$.
		\STATE Obtain local sensor measurements to form the set of detected agents:
		\quad $\mathcal{S}_i(t) = \{j \in \mathcal{V} \setminus \{i\} : \|p_i - p_j\| < R_{\rm sense}\}$
		\STATE Initialize safety control vector: $u_{\mathrm{safe},i} \leftarrow \mathbf{0} \in \mathbb{R}^d$
		\IF{$\mathcal{S}_i(t)$ is empty}
		\STATE No collision threat detected; full nominal control authorized.
		\STATE $\rho_i \leftarrow 1$
		\ELSE
		\STATE Initialize minimum distance tracker: $h_{0,\min} \leftarrow \infty$
		\FOR{\textbf{each} detected agent $j \in \mathcal{S}_i(t)$}
		\STATE Compute spatial safety constraint:
		\STATE \quad $h_0(x_i, x_j) \leftarrow \|p_i - p_j\|^2 - D_s^2$
		\STATE Update closest threat boundary:
		\STATE \quad $h_{0,\min} \leftarrow \min\big(h_{0,\min}, \; h_0(x_i, x_j)\big)$
		\STATE Compute the Higher-Order Control Barrier Function (HOCBF):
		\STATE \quad $h_{\mathrm{safe}}(x_i, x_j) \leftarrow 2(p_i - p_j)^T(v_i - v_j) + \gamma h_0(x_i, x_j)$
		\STATE Accumulate barrier-gradient repulsive force:
		\STATE \quad $u_{\mathrm{safe},i} \leftarrow u_{\mathrm{safe},i} + \frac{2\kappa\mu(p_i - p_j)}{\max(h_{\mathrm{safe}}(x_i, x_j),\, 0)^2 + \epsilon^2}$
		\ENDFOR
		\STATE Compute the barrier-modulation function based on the closest threat:
		\STATE \quad $\rho_i \leftarrow 1 - \exp\Big(-\beta \max\big(0, h_{0,\min}\big)\Big)$
		\ENDIF
		\STATE Apply the barrier-modulated architecture:
		\STATE \quad $u_i(t) \leftarrow \rho_i \, u_{\mathrm{nom},i} + u_{\mathrm{safe},i}$
		\RETURN $u_i(t)$
	\end{algorithmic}
\end{algorithm}

\subsection{Safety and Stability Analysis}

The following analysis establishes the collision avoidance and closed-loop stability properties of the proposed barrier-modulated architecture. To facilitate the stability proofs, we formally bound the system parameters and control components.

\begin{assumption}[System Boundedness]
	\label{as:bounded_dynamics}
	We assume that for all agents $i \in \mathcal{V}$ and all $t \ge 0$ the nominal formation controller, the parametric uncertainties and the sum of the safety repulsive forces from all non-critical agents ($k \in \mathcal{S}_i \setminus \{j\}$) acting on follower $i$ is bounded such that $\|u_{\mathrm{nom},i}\| \le \bar{U}_{\mathrm{nom}}$, $\|\Phi_i(p_i, v_i)\theta_i\| \le \bar{\Theta}$ and $\|\sum_{k \neq j} u_{\mathrm{safe},ik}\| \le \bar{U}_{\mathrm{other}}$.  
\end{assumption}

\begin{theorem}[Barrier-Gradient Controller Safety]
	\label{thm:safety_barrier}
	Consider the multi-agent system described by \eqref{eq:dynamics} with the control law \eqref{eq:control_input} and \eqref{eq:u_repel}. Suppose Assumptions \ref{as:preliminary} and \ref{as:bounded_dynamics} hold, and the initial states reside strictly within the safe set, i.e., $h_0(x_i(0), x_j(0)) > 0$ and $h_{\mathrm{safe}}(x_i(0), x_j(0)) > 0$ for all pairs $(i,j)$. If the repulsion gain $\kappa$, penalty weight $\mu$, and regularization parameter $\epsilon$ are chosen to satisfy the sufficient gain condition
	\begin{equation*}
		\frac{8\kappa\mu D_s^2}{\epsilon^2} > 4R_{\rm sense}\Big(\bar{U}_{\mathrm{nom}} + \bar{\Theta} + \bar{U}_{\mathrm{other}}\Big) + \gamma^2(R_{\rm sense}^2 - D_s^2)
		\label{eq:sufficient_gain_explicit}
	\end{equation*}
	then the safe set $\mathcal{C} = \{x \mid h_0(x_i, x_j) > 0, \; \forall j \in \mathcal{S}_i\}$ is forward invariant, guaranteeing collision avoidance for all $t \ge 0$.
\end{theorem}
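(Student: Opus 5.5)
Our plan is the standard HOCBF cascade argument combined with Nagumo's theorem. We first show that $h_{\mathrm{safe}}$ stays positive. Then the relation $\dot h_0 = h_{\mathrm{safe}} - \gamma h_0$ with $h_0(0)>0$ gives $h_0(t) \ge e^{-\gamma t}h_0(0) > 0$ by the comparison lemma, which is the forward invariance of $\mathcal{C}$. By Nagumo's theorem it suffices to check the boundary of $\{h_{\mathrm{safe}}>0\}$. Take any critical pair $(i,j)$ at a time $t^*$ where $h_{\mathrm{safe}}(x_i,x_j)=0$ for the first time while $h_0>0$ on $[0,t^*]$. We show $\dot h_{\mathrm{safe}}(t^*)>0$, which contradicts a first crossing.

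The key computation is
\begin{equation*}
\dot h_{\mathrm{safe}} = 2\|v_i-v_j\|^2 + 2(p_i-p_j)^T(\dot v_i - \dot v_j) + 2\gamma (p_i-p_j)^T(v_i-v_j).
\end{equation*}
At $h_{\mathrm{safe}}=0$ we have $2(p_i-p_j)^T(v_i-v_j) = -\gamma h_0$, so the last term equals $-\gamma^2 h_0 \ge -\gamma^2(R_{\rm sense}^2 - D_s^2)$, because $j\in\mathcal{S}_i$ forces $\|p_i-p_j\|<R_{\rm sense}$. We drop the nonnegative term $2\|v_i-v_j\|^2$.

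Next we insert $\dot v_i = \rho_i u_{\mathrm{nom},i} + u_{\mathrm{safe},ij} + \sum_{k\ne j}u_{\mathrm{safe},ik} + \Phi_i\theta_i$, and similarly for $j$. At $h_{\mathrm{safe}}=0$ the denominator in \eqref{eq:u_repel} equals $\epsilon^2$. The mutual repulsion terms therefore contribute $2(p_i-p_j)^T\bigl(u_{\mathrm{safe},ij}-u_{\mathrm{safe},ji}\bigr) = 2\cdot \frac{4\kappa\mu\|p_i-p_j\|^2}{\epsilon^2} \ge \frac{8\kappa\mu D_s^2}{\epsilon^2}$, using $\|p_i-p_j\|^2 > D_s^2$ on $\mathcal{C}$ (closure suffices). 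Here we assume the symmetric sensing $j\in\mathcal{S}_i \Leftrightarrow i\in\mathcal{S}_j$, which follows from \eqref{eq:sensor_set}.

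Every remaining term is bounded by Cauchy--Schwarz with $\|p_i-p_j\|<R_{\rm sense}$, $\rho_i\le1$, and Assumption~\ref{as:bounded_dynamics}. This gives $|2(p_i-p_j)^T(\cdot)| \le 2R_{\rm sense}\cdot 2(\bar U_{\mathrm{nom}}+\bar\Theta+\bar U_{\mathrm{other}})$, where the factor $2$ collects the contributions of agents $i$ and $j$. Combining the pieces, $\dot h_{\mathrm{safe}}(t^*) \ge \frac{8\kappa\mu D_s^2}{\epsilon^2} - 4R_{\rm sense}(\bar U_{\mathrm{nom}}+\bar\Theta+\bar U_{\mathrm{other}}) - \gamma^2(R_{\rm sense}^2-D_s^2) > 0$ by the gain condition. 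Hence $h_{\mathrm{safe}}$ cannot cross zero, and the cascade argument completes the proof.

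The main obstacle is the bookkeeping in the last step. Three cases need care. If $j$ is a leader, its acceleration is not governed by \eqref{eq:dynamics}, so only half of the repulsion term is available. We would handle this by bounding the leader acceleration within the same constants, or by noting that the condition has slack. Pairs that enter $\mathcal{S}_i$ at distance $R_{\rm sense}$ can be handled because $h_{\mathrm{safe}}$ is continuous. The nonsmooth $\max$ operations are handled by working with Dini derivatives in Nagumo's condition. We would also make explicit that the relevant pair is the critical one singled out in Assumption~\ref{as:bounded_dynamics}.
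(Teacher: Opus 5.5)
Your main computation matches the paper's proof: the same expansion of $\dot h_{\mathrm{safe}}$ at the boundary, the same mutual-repulsion term $8\kappa\mu\|p_{ij}\|^2/\epsilon^2$, and the same $4R_{\rm sense}(\bar U_{\mathrm{nom}}+\bar\Theta+\bar U_{\mathrm{other}})$ and $\gamma^2(R_{\rm sense}^2-D_s^2)$ bounds. Your cascade $\dot h_0=h_{\mathrm{safe}}-\gamma h_0$ usefully supplies the step from $h_{\mathrm{safe}}>0$ to $h_0>0$, which the paper leaves implicit. The gap is in your last paragraph: two cases you say can be handled cannot be handled as you describe. The paper's proof does not handle them either; it silently treats $j$ as a follower and assumes every sensed pair has $h_{\mathrm{safe}}>0$.

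\textbf{Entry into the sensing set.} This is the more serious case, and continuity of $h_{\mathrm{safe}}$ does not help. While $j\notin\mathcal{S}_i$ no term acts on the pair, so $h_{\mathrm{safe}}(0)>0$ does not propagate. Take $p_{ij}=p_i-p_j$, $v_{ij}=v_i-v_j$ and a head-on approach at closing speed $s$. Then $h_{\mathrm{safe}}=-2s\|p_{ij}\|+\gamma(\|p_{ij}\|^2-D_s^2)$ is positive at a large initial distance, but negative at $\|p_{ij}\|=R_{\rm sense}$ once $s>\gamma(R_{\rm sense}^2-D_s^2)/(2R_{\rm sense})$. Such a pair enters $\mathcal{S}_i$ with $h_{\mathrm{safe}}<0$, so your first-crossing argument has no starting point. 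Moreover, $\|u_{\mathrm{safe},ij}\|\le 2\kappa\mu R_{\rm sense}/\epsilon^2$ however negative $h_{\mathrm{safe}}$ is. The closing deceleration is therefore bounded, and a fast enough head-on entry reaches $\|p_{ij}\|=D_s$. You need an additional hypothesis, which the statement also lacks. For example, require every pair to enter $\mathcal{S}_i$ with $h_{\mathrm{safe}}>0$, i.e. $p_{ij}^Tv_{ij}>-\gamma(R_{\rm sense}^2-D_s^2)/2$ whenever $\|p_{ij}\|=R_{\rm sense}$.

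\textbf{The leader case.} This also fails under the stated gain condition. If $j\in\mathcal{V}_\ell$, no $u_{\mathrm{safe},ji}$ acts on $j$, so only $2p_{ij}^Tu_{\mathrm{safe},ij}=4\kappa\mu\|p_{ij}\|^2/\epsilon^2$ is available. Even if you bound $\|\ddot p_j^*\|$ by $\bar U_{\mathrm{nom}}+\bar\Theta+\bar U_{\mathrm{other}}$, you only obtain $\dot h_{\mathrm{safe}}\ge 4\kappa\mu D_s^2/\epsilon^2-4R_{\rm sense}(\bar U_{\mathrm{nom}}+\bar\Theta+\bar U_{\mathrm{other}})-\gamma^2(R_{\rm sense}^2-D_s^2)$. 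There is no slack to absorb the lost factor of two, because the stated condition is exactly what the symmetric follower--follower computation needs. Concretely, setting $4\kappa\mu D_s^2/\epsilon^2$ equal to $0.6$ times the condition's right-hand side satisfies the condition but leaves this lower bound negative. Since $\mathcal{S}_i$ ranges over all of $\mathcal{V}\setminus\{i\}$, follower--leader pairs are inside $\mathcal{C}$. You must therefore either require $4\kappa\mu D_s^2/\epsilon^2$ (rather than $8\kappa\mu D_s^2/\epsilon^2$) to exceed the right-hand side for those pairs, or restrict the claim to follower--follower pairs.
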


\begin{proof}
	By Nagumo's Theorem \cite{nagumo1942lage}, $\mathcal{C}$ is forward invariant if $\dot{h}_{\mathrm{safe}}|_{h_{\mathrm{safe}} \to 0^+} > 0$. Let $p_{ij} \triangleq p_i - p_j$, $v_{ij} \triangleq v_i - v_j$. Applying High-Order Control Barrier Function (HOCBF) boundary conditions \cite{xiao2021high}, $2p_{ij}^T v_{ij} \to -\gamma h_0(x_i, x_j) \implies$
	\begin{equation*}
		\dot{h}_{\mathrm{safe}}\Big|_{h_{\mathrm{safe}} \to 0^+} = 2\|v_{ij}\|^2 + 2p_{ij}^T \dot{v}_{ij} - \gamma^2 h_0(x_i, x_j).
	\end{equation*}
	
	Substituting $\dot{v}_i = \rho_i u_{\mathrm{nom},i} + u_{\mathrm{safe},i} + \Phi_i \theta_i$ and isolating the principal repulsive force yields
	\begin{equation*}
		2p_{ij}^T \dot{v}_{ij} = \frac{8\kappa\mu \|p_{ij}\|^2}{\epsilon^2} + 2p_{ij}^T w,
	\end{equation*}
	where $w \triangleq \rho_i u_{\mathrm{nom},i} - \rho_j u_{\mathrm{nom},j} + \Phi_i \theta_i - \Phi_j \theta_j + U_{i,\mathrm{other}} - U_{j,\mathrm{other}}$.
	
	Applying the Cauchy-Schwarz and Triangle inequalities to bound the worst-case disturbances, with $\|p_{ij}\| \le R_{\rm sense}$ and Assumption \ref{as:bounded_dynamics}
	\begin{equation*}
		2p_{ij}^T w \ge -4R_{\rm sense} \Big( \bar{U}_{\mathrm{nom}} + \bar{\Theta} + \bar{U}_{\mathrm{other}} \Big).
	\end{equation*}
	
	Noting $2\|v_{ij}\|^2 \ge 0$, $\|p_{ij}\| \ge D_s$, and $h_0(x_i, x_j) \le R_{\rm sense}^2 - D_s^2 \implies -\gamma^2 h_0(x_i, x_j) \ge -\gamma^2(R_{\rm sense}^2 - D_s^2)$, the boundary derivative satisfies
	\begin{equation*}
		\begin{split}
			\dot{h}_{\mathrm{safe}}\Big|_{h_{\mathrm{safe}} \to 0^+} &\ge \frac{8\kappa\mu D_s^2}{\epsilon^2} - 4R_{\rm sense}\Big(\bar{U}_{\mathrm{nom}} + \bar{\Theta} + \bar{U}_{\mathrm{other}}\Big) \\
			&\quad - \gamma^2(R_{\rm sense}^2 - D_s^2).
		\end{split}
	\end{equation*}
	
	By \eqref{eq:sufficient_gain_explicit}, $\dot{h}_{\mathrm{safe}}|_{h_{\mathrm{safe}} \to 0^+} > 0 \implies \mathcal{C}$ is forward invariant.
\end{proof}

\begin{theorem}[Barrier-Gradient Controller Stability]
	\label{thm:uub_barrier}
	Consider the multi-agent system described by \eqref{eq:dynamics} with the control law \eqref{eq:control_input} and \eqref{eq:u_repel}. Suppose Assumptions \ref{as:preliminary} and \ref{as:bounded_dynamics} hold, and the safety conditions of Theorem \ref{thm:safety_barrier} are satisfied, ensuring the safe set $\mathcal{C}$ is forward invariant. Then, the formation tracking error, along with the adaptive estimation errors, are Uniformly Ultimately Bounded (UUB).
\end{theorem}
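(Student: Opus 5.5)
The argument is a Lyapunov analysis in the error coordinates of the nominal design of Theorem~\ref{th:nominal_controller}. The terms $(1-\rho_i)u_{\mathrm{nom},i}$ and $u_{\mathrm{safe},i}$ are treated as a bounded perturbation, and the bound on this perturbation follows from the forward invariance given by Theorem~\ref{thm:safety_barrier}.

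\textbf{Error coordinates.} Let $\tilde p_f = p_f - p_f^*$ and $\tilde v_f = v_f - v_f^*$, where $p_f^* = -(\Omega_{ff}^{-1}\Omega_{f\ell}\otimes I_d)p_\ell$. This is well defined by affine localizability in Assumption~\ref{as:preliminary}. Stack the sliding variables as $s = (\bar\Omega_{ff}\otimes I_d)(\tilde p_f + a\tilde v_f)$, and set $\tilde\theta_i = \theta_i - \hat\theta_i$ and $\tilde\gamma_i = \hat\gamma_i - \gamma_i^*$. Here $\gamma_i^*$ is the componentwise bound on the leader-induced acceleration terms; it exists because leader accelerations are bounded. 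Writing $u_i = u_{\mathrm{nom},i} + \delta_i$ with
\[
\delta_i = -(1-\rho_i)u_{\mathrm{nom},i} + u_{\mathrm{safe},i},
\]
the closed loop is exactly the loop of \cite{zhi2021leader} driven by the extra input $\delta_i$.

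\textbf{Lyapunov function.} I would take the composite function used there,
\[
V = V_0(\tilde p_f,\tilde v_f) + \sum_i \Big(\tfrac{1}{2c_1}\|\tilde\gamma_i\|^2 + \tfrac{1}{2c_2}\|\tilde\theta_i\|^2\Big),
\]
where $V_0$ is the quadratic form in $(\tilde p_f,\tilde v_f)$ whose derivative is negative definite under condition \eqref{eq:lambda_condition}. Differentiating, the adaptation laws \eqref{eq:gamma_dot}--\eqref{eq:theta_dot} cancel the parameter cross terms. This leaves
\[
\dot V \le -c_0\|z\|^2 + \sum_i \|s_i\|\,\|\delta_i\|,\qquad z = (\tilde p_f,\tilde v_f),
\]
for some $c_0>0$.

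\textbf{Bounding the perturbation.} By Theorem~\ref{thm:safety_barrier}, every sensed pair satisfies $\|p_{ij}\|\ge D_s$, and $\|p_{ij}\|\le R_{\rm sense}$. Each repulsion term in \eqref{eq:u_repel} therefore has magnitude at most $2\kappa\mu R_{\rm sense}/\epsilon^2$, which gives $\|u_{\mathrm{safe},i}\|\le 2(n-1)\kappa\mu R_{\rm sense}/\epsilon^2$. Together with $0\le 1-\rho_i\le 1$ and Assumption~\ref{as:bounded_dynamics}, this yields $\|\delta_i\|\le \bar\delta$. Since $\|s\|\le c_s\|z\|$, Young's inequality gives
\[
\dot V \le -\tfrac{c_0}{2}\|z\|^2 + C\bar\delta^2 .
\]
Hence $\dot V<0$ whenever $\|z\|>\sqrt{2C/c_0}\,\bar\delta$, and the standard UUB theorem gives ultimate boundedness of $z$.

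\textbf{Main obstacle: the adaptive errors.} $V$ is not negative in $\tilde\gamma_i$ or $\tilde\theta_i$, and $\dot{\hat\gamma}_i\ge 0$ drifts upward while $s_i\neq 0$ persists. I would first try to establish boundedness of these estimates directly. Integrating $\dot V \le -\tfrac{c_0}{2}\|z\|^2 + C\bar\delta^2$ from the bounded initial value keeps $V$, and hence $\tilde\gamma$ and $\tilde\theta$, bounded on finite horizons, with growth controlled by the ultimate bound on $z$. If that does not give uniform bounds, I would modify \eqref{eq:gamma_dot}--\eqref{eq:theta_dot} with $\sigma$-modification leakage terms $-\sigma\hat\gamma_i$ and $-\sigma\hat\theta_i$. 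These add $-\tfrac{\sigma}{2}(\|\tilde\gamma_i\|^2 + \|\tilde\theta_i\|^2)$ to $\dot V$ plus a constant. $\dot V$ is then negative outside a compact ball in the full state $(z,\tilde\gamma,\tilde\theta)$, and the stated UUB of both tracking and estimation errors follows.
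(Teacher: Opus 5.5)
Your treatment of the tracking error is the paper's argument. The paper also writes $u_i=u_{\mathrm{nom},i}+\varepsilon_i$ with $\varepsilon_i=(\rho_i-1)u_{\mathrm{nom},i}+u_{\mathrm{safe},i}$. It uses the Lyapunov function of \cite{zhi2021leader}, in the coordinates $(e_f,s_f)$ rather than $(\tilde p_f,\tilde v_f)$, reaches $\dot V\le-\alpha\|Z_{\mathrm{track}}\|^2+\bar\varepsilon\|Z_{\mathrm{track}}\|$, and stops there. Your explicit bound $2(n-1)\kappa\mu R_{\rm sense}/\epsilon^2$ on $\|u_{\mathrm{safe},i}\|$ is cleaner than the paper's appeal to Theorem~\ref{thm:safety_barrier}. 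It holds unconditionally, because of the $\epsilon^2$ in \eqref{eq:u_repel} and $\|p_{ij}\|<R_{\rm sense}$ on $\mathcal{S}_i$. The paper's proof ends with ``$Z_{\mathrm{track}}$ is UUB'' and never returns to $\tilde\theta_i,\tilde\gamma_i$. So the step you single out as the main obstacle is precisely the part of the statement the paper leaves unproven.

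That step is a genuine gap, and neither of your routes closes it for the controller as stated.
\begin{itemize}
\item Integrating $\dot V\le-\tfrac{c_0}{2}\|z\|^2+C\bar\delta^2$ only gives $V(t)\le V(0)+C\bar\delta^2t$. That is linear growth, not a uniform bound. Concretely, $\hat\gamma_i(t)=\hat\gamma_i(0)+c_1\int_0^t\operatorname{abs}(s_i)\,d\tau$, so whenever the safety action holds an agent off its target and $\operatorname{abs}(s_i)$ stays away from zero, $\hat\gamma_i$ grows linearly. This is classical parameter drift.
\item Assumption~\ref{as:bounded_dynamics} cannot be invoked to close this. It bounds only the combination $s_i+\hat\gamma_i\circ\operatorname{sgn}(s_i)+\Phi_i\hat\theta_i$, not $\tilde\theta_i$ or $\tilde\gamma_i$ separately.
\item The $\sigma$-modification is the standard correct remedy. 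However, it replaces \eqref{eq:gamma_dot}--\eqref{eq:theta_dot}, so it proves UUB for a different closed loop, not the one in the theorem.
\end{itemize}

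Your sentence ``$\dot V<0$ for large $\|z\|$, hence $z$ is UUB by the standard theorem'' also depends on the missing piece; it is the paper's final step as well. The ultimate-boundedness theorem in \cite{khalil2002nonlinear} needs $\dot V$ negative outside a ball in the \emph{full} argument $(z,\tilde\gamma,\tilde\theta)$ of $V$. With negativity only for large $\|z\|$, $V$ can keep growing while $\|z\|$ is small. Since $V\ge\lambda\|z\|^2$ is your only link back to $z$, no uniform bound on $z$ follows until the estimates are known to be bounded. So without a robust modification (leakage or projection) or an added hypothesis bounding $\hat\theta_i,\hat\gamma_i$, neither your argument nor the paper's establishes the statement as written.
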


\begin{proof}
	Let $e_{f} \in \mathbb{R}^{d n_f}$, $s_f = \bar{\Omega}_{ff} (e_f + a \dot{e}_f) \in \mathbb{R}^{d n_f} \implies \dot{e}_f = \frac{1}{a}(\bar{\Omega}_{ff}^{-1} s_f - e_f)$. Define $\tilde{\theta}_i = \theta_i - \hat{\theta}_i$ and $\tilde{\gamma}_i = \gamma_i - \hat{\gamma}_i$. Given $\dot{\theta}_i = 0, \dot{\gamma}_i = 0 \implies \dot{\tilde{\theta}}_i = -\dot{\hat{\theta}}_i, \dot{\tilde{\gamma}}_i = -\dot{\hat{\gamma}}_i$.
	
	Consider the Lyapunov candidate
	\begin{equation*}
    \begin{split}
        V &= \frac{1}{2a} s_f^T \bar{\Omega}_{ff}^{-1} s_f + \frac{1}{2a^2} e_f^T \bar{\Omega}_{ff}^{-1} e_f \\
        &\quad + \sum_{i \in \mathcal{V}_f} \left( \frac{1}{2 c_2} \tilde{\theta}_i^T \tilde{\theta}_i + \frac{1}{2 c_1} \tilde{\gamma}_i^T \tilde{\gamma}_i \right).
    \end{split}
    \end{equation*}
	
	Let $u_i = u_{\mathrm{nom},i} + \varepsilon_i(t)$, where $\varepsilon_i(t) = (\rho_i - 1)u_{\mathrm{nom},i} + u_{\mathrm{safe},i}$. By Theorem \ref{thm:safety_barrier}, $h_{\mathrm{safe}} > 0 \implies \exists \bar{\varepsilon} > 0 \text{ s.t. } \|\varepsilon(t)\| \le \bar{\varepsilon}, \; \forall t \ge 0$.
	
	Differentiating $V$, substituting $\dot{\hat{\theta}}_i = c_2 \Phi_i^T s_i, \dot{\hat{\gamma}}_i = c_1 \operatorname{abs}(s_i)$ \cite{zhi2021leader}, and applying $-\gamma_i^T \operatorname{abs}(s_i) - s_i^T \ddot{p}_{f,i}^* \le 0$ \cite{slotine1991applied} yields:
	\begin{equation*}
		\dot{V} \le -s_f^T \left( I - \frac{1}{a^2} \bar{\Omega}_{ff}^{-1} \right) s_f - \frac{1}{a^3} e_f^T \bar{\Omega}_{ff}^{-1} e_f + s_f^T M e_f + s_f^T \varepsilon(t),
	\end{equation*}
	where $M \triangleq \frac{1}{a^3} \bar{\Omega}_{ff}^{-2} - \frac{1}{a^2} I$.
	
	By Young's inequality, $\exists \eta > 0$ such that $s_f^T M e_f \le \frac{\eta}{2} \|s_f\|^2 + \frac{\|M\|^2}{2\eta} \|e_f\|^2$. For a sufficiently large nominal gain $a > 0$ \cite{khalil2002nonlinear}
	\begin{equation*}
    \begin{split}
        \alpha &\triangleq \min \Bigg( \lambda_{\min}\Big(I - \frac{1}{a^2} \bar{\Omega}_{ff}^{-1}\Big) - \frac{\eta}{2}, \\
        &\qquad\qquad \frac{\lambda_{\min}(\bar{\Omega}_{ff}^{-1})}{a^3} - \frac{\|M\|^2}{2\eta} \Bigg) > 0.
    \end{split}
\end{equation*}
	
	Defining $Z_{\mathrm{track}} = [e_f^T, s_f^T]^T \in \mathbb{R}^{2dn_f}$ and constant $\vartheta \in (0, 1)$
	\begin{equation*}
	\begin{split}
		\dot{V} &\le -\alpha \|Z_{\mathrm{track}}\|^2 + \|Z_{\mathrm{track}}\| \bar{\varepsilon} \\
		&\le -(1 - \vartheta)\alpha \|Z_{\mathrm{track}}\|^2 - \vartheta \alpha \|Z_{\mathrm{track}}\|^2 + \|Z_{\mathrm{track}}\| \bar{\varepsilon}.
	\end{split}
    \end{equation*}
	\begin{equation*}
		\dot{V} < 0 \quad \forall \|Z_{\mathrm{track}}\| > \frac{\bar{\varepsilon}}{\vartheta \alpha}.
	\end{equation*}
	
	$\implies Z_{\mathrm{track}}$ is Uniformly Ultimately Bounded (UUB) \cite{khalil2002nonlinear}.
\end{proof}

\section{Safety Controller II: Adaptive Dynamic Programming}
\label{sec:adp}

The second safety controller uses an actor-critic neural network to learn the optimal safety policy online via Adaptive Dynamic Programming (ADP). When the safety-modulated control architecture is active, the nominal control effort is attenuated by the barrier function $\rho_i \in [0, 1]$. To maintain closed-loop stability and ensure proper Lyapunov cancellations, the adaptive estimation laws must be proportionally modified
\begin{align}
	\dot{\hat{\gamma}}_i &= \rho_i \, c_1 |s_i|, \label{eq:gamma_dot_adp} \\
	\dot{\hat{\theta}}_i &= \rho_i \, c_2 \Phi_i^T s_i. \label{eq:theta_dot_adp}
\end{align}

\begin{remark}
	The $\rho_i$ factor in \eqref{eq:gamma_dot_adp}--\eqref{eq:theta_dot_adp} serves a critical practical function: it acts as an adaptive gain freezing mechanism. When an agent approaches a safety boundary, the nominal controller is severely attenuated ($\rho_i \ll 1$). Consequently, the sliding variable $s_i$ does not converge to zero because the safety override intentionally pushes the agent off its nominal tracking trajectory. While the analytical barrier-gradient controller can absorb the attenuation of the nominal tracking objective as a bounded perturbation without modifying the adaptive laws, the Adaptive Dynamic Programming (ADP) architecture strictly requires the adaptive gain freezing mechanism.
	\label{rem:freezing}
\end{remark}

To ensure the neural networks receive a meaningful signal when safety is critical, a danger-weighted input state $z_i = [z_{p,i}^T, z_{v,i}^T]^T \in \mathbb{R}^{2d}$ is defined
\begin{equation}
	z_i = \sum_{j \in \mathcal{S}_i} w_{ij} \begin{bmatrix} p_i - p_j \\ v_i - v_j \end{bmatrix}, \quad w_{ij} = \frac{D_s^4}{\left(\|p_i - p_j\|^2 + D_s^2\right)^2},
	\label{eq:danger_state}
\end{equation}
where $w_{ij} \in (0,1]$ assigns larger weight to closer agents. Define the total danger weight $w_{\Sigma,i} = \sum_{j \in \mathcal{S}_i} w_{ij}$. Taking the partial derivative of the velocity component $z_{v,i} \in \mathbb{R}^d$ with respect to the agent's physical velocity $v_i$ yields
\begin{equation}
	\frac{\partial z_{v,i}}{\partial v_i} = w_{\Sigma,i} \, I_d.
	\label{eq:chain_rule}
\end{equation}

\subsection{Optimal Safety Control via HJB}

To synthesize the optimal safety policy, we formulate an infinite-horizon discounted optimal control problem. The cost functional for agent $i$ is defined as:
\begin{equation}
    \begin{split}
        V_i(z_i(t)) &= \int_t^\infty e^{-\alpha(\tau-t)} \Big( B_i(z_i(\tau)) \\
        &\quad + u_{\mathrm{safe},i}^T(\tau) R_i \, u_{\mathrm{safe},i}(\tau) \Big) d\tau,
    \end{split}
    \label{eq:adp_cost}
\end{equation}
where $\alpha > 0$ is the discount factor ensuring the finiteness of the integral, and $R_i \in \mathbb{R}^{d \times d}$ is a symmetric positive-definite weighting matrix penalizing excessive safety interventions. Let $V_i^*(z_i)$ denote the optimal value function that minimizes \eqref{eq:adp_cost}.

To solve for the optimal policy using the Hamilton-Jacobi-Bellman (HJB) framework, we must isolate the safety control input $u_{\mathrm{safe},i}$ within the time derivative of the danger state $\dot{z}_i$. By separating the drift dynamics from the control input, the evolution of $z_i$ can be expressed in a control-affine form
\begin{equation}
	\dot{z}_i = F_i(x) + G_i(x) u_{\mathrm{safe},i}
\end{equation}
where $F_i(x)$ encapsulates the drift driven by the nominal controller, neighbors' states, and external disturbances. The input coupling matrix $G_i(x) \in \mathbb{R}^{2d \times d}$ dictates how the safety control specifically influences the danger state. Using the chain rule established in \eqref{eq:chain_rule}, we explicitly construct $G_i(x)$ as
\begin{equation}
	G_i(x) = \frac{\partial z_i}{\partial v_i} = \begin{bmatrix} \mathbf{0}_{d \times d} \\ \frac{\partial z_{v,i}}{\partial v_i} \end{bmatrix} = \begin{bmatrix} \mathbf{0}_{d \times d} \\ w_{\Sigma,i} I_d \end{bmatrix}
	\label{eq:input_coupling}
\end{equation}

The optimal value function $V_i^*(z_i)$ must satisfy the corresponding continuous-time HJB equation
\begin{equation}
	\min_{u_{\mathrm{safe},i}} \mathcal{H}_i(z_i, \nabla_{z_i} V_i^*, u_{\mathrm{safe},i}) = 0
	\label{eq:hjb_safe}
\end{equation}
where the Hamiltonian $\mathcal{H}_i$ is defined as
\begin{equation}
    \begin{split}
        \mathcal{H}_i &\triangleq B_i(z_i) + u_{\mathrm{safe},i}^T R_i \, u_{\mathrm{safe},i} \\
        &\quad + (\nabla_{z_i} V_i^*)^T \Big( F_i(x) + G_i(x) u_{\mathrm{safe},i} \Big) - \alpha V_i^*
    \end{split}
    \label{eq:hamiltonian}
\end{equation}

Because the weighting matrix $R_i$ is strictly positive-definite, the Hamiltonian is globally strictly convex with respect to the control input $u_{\mathrm{safe},i}$. Therefore, the necessary and sufficient condition for optimality is the stationarity condition, $\frac{\partial \mathcal{H}_i}{\partial u_{\mathrm{safe},i}} = 0$ \cite[Theorem 5.8]{rudin1976principles}. Differentiating the Hamiltonian with respect to the control input yields
\begin{equation}
	\frac{\partial \mathcal{H}_i}{\partial u_{\mathrm{safe},i}} = 2 R_i \, u_{\mathrm{safe},i} + G_i(x)^T \nabla_{z_i} V_i^* = 0
\end{equation}

Solving for the optimal safety control law $u_{\mathrm{safe},i}^*$, we obtain
\begin{equation}
	u_{\mathrm{safe},i}^* = -\frac{1}{2} R_i^{-1} G_i(x)^T \nabla_{z_i} V_i^*
\end{equation}

Finally, substituting the explicit block structure of the input coupling matrix $G_i(x)$ from \eqref{eq:input_coupling} isolates the velocity components of the value function gradient. Let the full gradient be partitioned as $\nabla_{z_i} V_i^* = \big[ (\nabla_{z_{p,i}} V_i^*)^T, (\nabla_{z_{v,i}} V_i^*)^T \big]^T$. The matrix multiplication simplifies directly to $G_i(x)^T \nabla_{z_i} V_i^* = w_{\Sigma,i} \nabla_{z_{v,i}} V_i^*$. Thus, the closed-form optimal safety policy is
\begin{equation}
	u_{\mathrm{safe},i}^* = -\frac{1}{2} R_i^{-1} \, w_{\Sigma,i} \, \nabla_{z_{v,i}} V_i^*
	\label{eq:optimal_safe_control}
\end{equation}
where $z_{v,i} = z_i[d{+}1{:}2d]$ denotes exclusively the velocity components of the augmented danger state.

\subsection{Actor-Critic Neural Network}

\subsubsection{Critic Network}

To solve the Hamilton-Jacobi-Bellman (HJB) equation online, an adaptive critic neural network is constructed to approximate the unknown optimal value function $V_i^*(z_i)$. Assuming $V_i^*(z_i)$ is smooth and continuous over a compact operating domain, it can be represented by a linearly parameterized neural network according to the universal approximation theorem \cite{hornik1990universal}
\begin{equation}
	V_i^*(z_i) = W_{c,i}^{*T} \sigma_{c,i}(z_i) + \varepsilon_{c,i}(z_i)
\end{equation}
where $W_{c,i}^* \in \mathbb{R}^{N_c}$ represents the ideal, unknown critic weights, $\sigma_{c,i}(z_i) \in \mathbb{R}^{N_c}$ is the vector of activation functions, and $\varepsilon_{c,i}(z_i)$ is the bounded uniform approximation error.

In practice, the ideal weights are unknown, so the agent relies on the real-time estimate $\hat{W}_{c,i}(t)$. The estimated value function is given by
\begin{equation}
	\hat{V}_i(z_i) = \hat{W}_{c,i}^T \sigma_{c,i}(z_i)
	\label{eq:critic_approximation}
\end{equation}

By substituting the estimated value function \eqref{eq:critic_approximation} and the current actor-generated policy $\hat{u}_{\mathrm{safe},i}$ into the exact HJB equation \eqref{eq:hjb_safe}, the equation will no longer equal zero. Instead, it yields a residual discrepancy known as the Bellman error, $e_{c,i}(t) \in \mathbb{R}$
\begin{equation}
    \begin{split}
        e_{c,i} &= B_i(z_i) + \hat{u}_{\mathrm{safe},i}^T R_i \, \hat{u}_{\mathrm{safe},i} \\
        &\quad + \hat{W}_{c,i}^T \nabla_{z_i} \sigma_{c,i}(z_i) \dot{z}_i - \alpha \hat{W}_{c,i}^T \sigma_{c,i}(z_i)
    \end{split}
\end{equation}

To streamline the adaptive learning laws, we define the critic regressor vector $\omega_i(t) \in \mathbb{R}^{N_c}$ as
\begin{equation}
	\omega_i \triangleq \nabla_{z_i} \sigma_{c,i}(z_i) \dot{z}_i - \alpha \sigma_{c,i}(z_i)
	\label{eq:omega_regressor}
\end{equation}
So the Bellman error to be rewritten in the following compact, computationally tractable form
\begin{equation}
	e_{c,i} = B_i(z_i) + \hat{u}_{\mathrm{safe},i}^T R_i \, \hat{u}_{\mathrm{safe},i} + \hat{W}_{c,i}^T \omega_i
	\label{eq:bellman_error}
\end{equation}

To drive the estimated value function toward the optimal solution, the critic network weights are updated to minimize the instantaneous squared Bellman error, $E_i = \frac{1}{2} e_{c,i}^2$. We employ a normalized gradient descent algorithm, leading to the following continuous-time update law
\begin{equation}
	\dot{\hat{W}}_{c,i} = - \eta_c \frac{\omega_i}{(1 + \omega_i^T \omega_i)^2} \, e_{c,i}
	\label{eq:critic_update}
\end{equation}
where $\eta_c > 0$ dictates the critic learning rate. The squared normalization term $(1 + \omega_i^T \omega_i)^2$ in the denominator is a critical design feature; it strictly bounds the adaptation rate during sudden dynamic transients, preventing numerical instability when the regressor vector $\omega_i$ experiences extreme magnitudes during aggressive evasion maneuvers.

\subsubsection{Actor Network}

While the critic network estimates the optimal value function, the actor network is responsible for approximating the optimal safety policy $u_{\mathrm{safe},i}^*$ derived in \eqref{eq:optimal_safe_control}. The estimated safety policy, denoted as $\hat{u}_{\mathrm{safe},i} \in \mathbb{R}^d$, is formulated using a linearly parameterized neural network
\begin{equation}
	\hat{u}_{\mathrm{safe},i}(z_i) = \hat{W}_{a,i}^T \sigma_{a,i}(z_i)
\end{equation}
where $\hat{W}_{a,i} \in \mathbb{R}^{2d \times d}$ is the estimated actor weight matrix and $\sigma_{a,i}(z_i) \in \mathbb{R}^{2d}$ is the actor activation function vector. 

To train the actor network, we define an ideal target policy based on the current estimate of the value function provided by the critic. Substituting the critic's value function approximation $\hat{V}_i(z_i) = \hat{W}_{c,i}^T \sigma_{c,i}(z_i)$ into the analytical optimal control law yields the actor target
\begin{equation}
	u_{\mathrm{target},i} = -\frac{1}{2} R_i^{-1} w_{\Sigma,i} \nabla_{z_{v,i}} \sigma_{c,i}(z_i)^T \hat{W}_{c,i}
\end{equation}

The actor network seeks to minimize the discrepancy between its current output $\hat{u}_{\mathrm{safe},i}$ and this critic-derived target. We define the actor approximation error $e_{a,i} \in \mathbb{R}^d$ as
\begin{equation}
    \begin{split}
        e_{a,i} &\triangleq \hat{u}_{\mathrm{safe},i} - u_{\mathrm{target},i} \\
        &= \hat{W}_{a,i}^T \sigma_{a,i}(z_i) + \frac{1}{2} R_i^{-1} w_{\Sigma,i} \nabla_{z_{v,i}} \sigma_{c,i}(z_i)^T \hat{W}_{c,i}
    \end{split}
    \label{eq:actor_error}
\end{equation}

To continuously minimize the squared actor error $E_{a,i} = \frac{1}{2} e_{a,i}^T e_{a,i}$, the actor weights are updated using a continuous-time normalized gradient descent algorithm
\begin{equation}
	\dot{\hat{W}}_{a,i} = - \eta_a \frac{\sigma_{a,i}(z_i)}{1 + \sigma_{a,i}(z_i)^T \sigma_{a,i}(z_i)} \, e_{a,i}^T
	\label{eq:actor_update}
\end{equation}
where $\eta_a > 0$ is the strictly positive actor learning rate. Similar to the critic update, the normalization term in the denominator bounds the adaptation rate, ensuring that the gradient does not grow unboundedly when the danger state magnitude $\|z_i\|$ is large.

\begin{remark}
	The actor network requires an initially stabilizing (admissible) control policy to guarantee convergence. To satisfy this requirement, the actor weight matrix is warm-started rather than initialized to zero.
\end{remark}

To satisfy the Persistent Excitation (PE) condition a probing noise $n_i(t) \in \mathbb{R}^d$ is injected into the control channel. The total applied safety-modulated control input for follower $i$ is consequently augmented to
\begin{equation}
	u_i(t) = \rho_i \, u_{\mathrm{nom},i} + \hat{u}_{\mathrm{safe},i} + n_i(t)
	\label{eq:control_with_noise}
\end{equation}

\begin{remark}
	While continuous noise injection theoretically guarantees weight convergence, subjecting the multi-agent system to a constant-amplitude disturbance would permanently degrade the formation tracking accuracy and introduce steady-state chattering. To mathematically resolve this exploration-exploitation trade-off, the probing signal $n_i(t)$ is designed as a multi-frequency exploratory signal bounded by an exponentially decaying envelope $\|n_i(t)\| \le n_0 e^{-\kappa t}$.
	\label{rem:probing_noise}
\end{remark}

To synthesize the theoretical components of the actor-critic neural network, the danger-weighted state mapping, and the barrier modulation into a practical execution sequence, the complete algorithmic flow of the ADP safety controller is detailed in Algorithm \ref{alg:adp_safety}. 

\begin{algorithm}[htbp]
    \caption{Actor-Critic ADP Safety Controller for Follower $i$}
    \label{alg:adp_safety}
    \begin{algorithmic}[1] 
        \REQUIRE Current state $x_i(t) = (p_i, v_i)$, nominal control $u_{\mathrm{nom},i}$, safety distance $D_s$, sensor radius $R_{\rm sense}$, network weights $\hat{W}_{c,i}, \hat{W}_{a,i}$, and tuning parameters $\eta_c, \eta_a, \alpha, R_i, \beta, \mu$.
        \ENSURE Total safety-modulated control input $u_i(t)$ and weight derivatives $\dot{\hat{W}}_{c,i}, \dot{\hat{W}}_{a,i}$.
        
        \STATE Obtain local sensor measurements: $\mathcal{S}_i(t) = \{j \in \mathcal{V} \setminus \{i\} : \|p_i - p_j\| < R_{\rm sense}\}$
        \IF{$\mathcal{S}_i(t)$ is empty}
            \STATE $\rho_i \leftarrow 1$, $u_{\mathrm{safe},i} \leftarrow \mathbf{0}$, $u_i(t) \leftarrow u_{\mathrm{nom},i}$
            \STATE $\dot{\hat{W}}_{c,i} \leftarrow \mathbf{0}$, $\dot{\hat{W}}_{a,i} \leftarrow \mathbf{0}$ \quad \COMMENT{Networks pause when strictly safe}
        \ELSE
            \STATE \textbf{Phase 1: State Mapping \& Forward Pass}
            \STATE Initialize $h_{0,\min} \leftarrow \infty$, $z_i \leftarrow \mathbf{0} \in \mathbb{R}^{2d}$, $w_{\Sigma,i} \leftarrow 0$, $B_i \leftarrow 0$
            \FOR{\textbf{each} detected agent $j \in \mathcal{S}_i(t)$}
                \STATE $h_0(x_i, x_j) \leftarrow \|p_i - p_j\|^2 - D_s^2$
                \STATE $h_{0,\min} \leftarrow \min\big(h_{0,\min}, \; h_0(x_i, x_j)\big)$
                \STATE $w_{ij} \leftarrow D_s^4 / \big(h_0(x_i, x_j) + 2D_s^2\big)^2$
                \STATE $z_i \leftarrow z_i + w_{ij} [ (p_i - p_j)^T, (v_i - v_j)^T ]^T$
                \STATE $w_{\Sigma,i} \leftarrow w_{\Sigma,i} + w_{ij}$
                \STATE $B_i \leftarrow B_i + \text{BarrierPenalty}(h_{\mathrm{safe}}(x_i, x_j))$
            \ENDFOR
            \STATE $\rho_i \leftarrow 1 - \exp\Big(-\beta \max\big(0, h_{0,\min}\big)\Big)$
            
            \STATE \textbf{Phase 2: Policy Execution}
            \STATE $\sigma_{a,i}(z_i) \leftarrow z_i$
            \STATE $\hat{u}_{\mathrm{safe},i} \leftarrow \text{sat}(\hat{W}_{a,i}^T \sigma_{a,i}(z_i), \bar{U}_{\mathrm{NN},\max})$
            \STATE Calculate decaying probing noise $n_i(t)$
            \STATE Apply control: $u_i(t) \leftarrow \rho_i \, u_{\mathrm{nom},i} + \hat{u}_{\mathrm{safe},i} + n_i(t)$
            
            \STATE \textbf{Phase 3: Network Updates (Learning)}
            \STATE Calculate temporal derivative $\dot{z}_i$ based on relative kinematics.
            \STATE Compute basis vector $\sigma_{c,i}(z_i)$ and its Jacobian $\nabla_{z_i} \sigma_{c,i}(z_i)$
            \STATE Compute regressor: $\omega_i \leftarrow \nabla_{z_i} \sigma_{c,i}(z_i) \dot{z}_i - \alpha \sigma_{c,i}(z_i)$
            \STATE \textit{// Critic Network Update}
            \STATE $e_{c,i} \leftarrow B_i + \hat{u}_{\mathrm{safe},i}^T R_i \, \hat{u}_{\mathrm{safe},i} + \hat{W}_{c,i}^T \omega_i$
            \STATE $\dot{\hat{W}}_{c,i} \leftarrow - \eta_c \frac{\omega_i}{(1 + \omega_i^T \omega_i)^2} e_{c,i}$
            \STATE Apply soft weight projection to $\dot{\hat{W}}_{c,i}$ ensuring $\|\hat{W}_{c,i}\|_\infty \le \bar{W}_{c,\max}$
            
            \STATE \textit{// Actor Network Update}
            \STATE $e_{a,i} \leftarrow \hat{u}_{\mathrm{safe},i} + \frac{1}{2} R_i^{-1} w_{\Sigma,i} \nabla_{z_{v,i}} \sigma_{c,i}(z_i)^T \hat{W}_{c,i}$
            \STATE $\dot{\hat{W}}_{a,i} \leftarrow - \eta_a \frac{\sigma_{a,i}(z_i)}{1 + \sigma_{a,i}(z_i)^T \sigma_{a,i}(z_i)} e_{a,i}^T$
            \STATE Apply soft weight projection to $\dot{\hat{W}}_{a,i}$ ensuring $\|\hat{W}_{a,i}\|_\infty \le \bar{W}_{a,\max}$
        \ENDIF
        \RETURN $u_i(t)$
    \end{algorithmic}
\end{algorithm}

\subsection{Safety and Stability Analysis}

To formally establish the stability and safety of the proposed multi-agent control architecture, we first define the standard assumptions regarding the neural network approximation capabilities and the system's operating domain.

\begin{assumption}
	\label{as:nn_rigor}
	We assume that the optimal value function $V_i^*(z_i)$ is continuously differentiable on a compact operating domain $\mathcal{Z}$, satisfying $V_i^*(\mathbf{0}) = 0$ and $V_i^*(z_i) > 0$ for all $z_i \neq \mathbf{0}$. The actor and critic activation basis functions, along with their spatial gradients, are locally Lipschitz continuous and uniformly bounded. Furthermore, there exist ideal, unknown constant weights $W_{c,i}^*$ and $W_{a,i}^*$ bounded by strict positive constants $\bar{W}_c$ and $\bar{W}_a$, such that the network reconstruction errors and their spatial gradients are uniformly bounded by $\bar{\varepsilon}_c$, $\bar{\varepsilon}_a$, and $\bar{\varepsilon}_{c,\nabla}$. Finally, the probing noise injected into the control channel is bounded such that $\|n_i(t)\| \le \bar{N}$ for all $t \ge 0$.
\end{assumption}

\begin{assumption}
	\label{as:pe_condition}
	To guarantee the convergence of the actor-critic weight errors and prevent parameter drift, the probing noise $n_i(t)$ ensures that the generated regressor vectors are persistently exciting over the compact set $\mathcal{Z}$. Specifically, there exist strictly positive constants $\beta_1, \beta_2, \beta_3, \beta_4$ and a finite time window $T > 0$ such that for all $t \ge 0$, the integrated normalized regressors satisfy the two-sided bounds:
	\begin{equation}
		\beta_1 I \le \int_t^{t+T} \frac{\omega_i(\tau) \omega_i^T(\tau)}{(1 + \omega_i^T(\tau) \omega_i(\tau))^2} d\tau \le \beta_2 I
	\end{equation}
	\begin{equation}
		\beta_3 I \le \int_t^{t+T} \frac{\sigma_{a,i}(z_i(\tau)) \sigma_{a,i}^T(z_i(\tau))}{1 + \sigma_{a,i}^T(z_i(\tau)) \sigma_{a,i}(z_i(\tau))} d\tau \le \beta_4 I
	\end{equation}
\end{assumption}

Based on these assumptions, the theoretical analysis is divided into four primary theorems: absolute collision avoidance, neural network weight convergence, composite closed-loop stability, and nominal recovery.

\begin{theorem}[ADP Controller Safety]
	\label{thm:adp_safety}
	Consider the multi-agent system governed by the nonlinear dynamics \eqref{eq:dynamics} operating under the barrier-modulated actor-critic control law $u_i = \rho_i \, u_{\mathrm{nom},i} + \hat{u}_{\mathrm{safe},i} + n_i(t)$, where the actor network output is explicitly saturated such that $\|\hat{u}_{\mathrm{safe},i}\| \le \bar{U}_{\mathrm{NN},\max}$. Suppose Assumptions \ref{as:preliminary}, \ref{as:bounded_dynamics}, and \ref{as:nn_rigor} hold, and the initial states strictly reside within the safe set, such that $h_0(x_i(0), x_j(0)) > 0$ and $h_{\mathrm{safe}}(x_i(0), x_j(0)) > 0$ for all valid pairs $(i,j)$. 
	
	If the actor network's saturation limit $\bar{U}_{\mathrm{NN},\max}$ satisfies the sufficient boundary condition:
	\begin{equation}
		\bar{U}_{\mathrm{NN},\max} > \bar{U}_{\mathrm{nom}} + \bar{\Theta} + \bar{N} + \bar{U}_{\mathrm{other}} + \frac{\gamma^2(R_{\rm sense}^2 - D_s^2)}{4 D_s},
		\label{eq:adp_sufficient_bound}
	\end{equation}
	then the safe set $\mathcal{C} = \{x \mid h_0(x_i, x_j) > 0, \; \forall j \in \mathcal{S}_i\}$ is forward invariant, guaranteeing absolute collision avoidance for all $t \ge 0$.
\end{theorem}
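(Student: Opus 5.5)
The plan is to mirror the proof of Theorem~\ref{thm:safety_barrier}, with the fixed repulsive term replaced by the saturated actor output and the probing noise added to the disturbance budget. By Nagumo's theorem \cite{nagumo1942lage}, it suffices to show that $\dot h_{\mathrm{safe}} > 0$ on the boundary $h_{\mathrm{safe}}\to 0^+$ for every sensed pair $(i,j)$. The second-order structure then closes the argument. Since $h_{\mathrm{safe}} = \dot h_0 + \gamma h_0$, keeping $h_{\mathrm{safe}}\ge 0$ gives $\dot h_0 \ge -\gamma h_0$. The comparison lemma then yields $h_0(t)\ge h_0(0)e^{-\gamma t}>0$, so $\mathcal{C}$ is forward invariant.

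Write $p_{ij}=p_i-p_j$ and $v_{ij}=v_i-v_j$. On the boundary, use $2p_{ij}^Tv_{ij}=-\gamma h_0$, exactly as before, to obtain
\begin{equation*}
\dot h_{\mathrm{safe}}\big|_{h_{\mathrm{safe}}\to0^+} = 2\|v_{ij}\|^2 + 2p_{ij}^T\dot v_{ij} - \gamma^2 h_0 .
\end{equation*}
Next substitute $\dot v_i=\rho_iu_{\mathrm{nom},i}+\hat u_{\mathrm{safe},i}+n_i+\Phi_i\theta_i$, and similarly for $j$. Split the actor contribution into the component aimed at the critical pair $j$ and the aggregated effect of the other sensed agents. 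The latter is bounded by $\bar U_{\mathrm{other}}$ via Assumption~\ref{as:bounded_dynamics}. Collect the remaining terms into a disturbance $w$ made of the nominal, parametric, noise and other-agent parts. Cauchy--Schwarz with $\|p_{ij}\|\le R_{\rm sense}$, Assumption~\ref{as:bounded_dynamics}, and $\|n_i\|\le\bar N$ from Assumption~\ref{as:nn_rigor} gives $|2p_{ij}^Tw|\le 4\|p_{ij}\|(\bar U_{\mathrm{nom}}+\bar\Theta+\bar N+\bar U_{\mathrm{other}})$. Here the factor $4$ comes from $2\times$(contributions of both $i$ and $j$).

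The key step, and the main obstacle, is a lower bound on the principal safety term $2p_{ij}^T(\hat u_{\mathrm{safe},i}-\hat u_{\mathrm{safe},j})$ on the boundary. The actor is a learned map, not a fixed repulsion. I would argue that in the critical regime the actor exerts its saturated authority along the repulsive direction $p_{ij}/\|p_{ij}\|$. This is where the HJB solution (\ref{eq:optimal_safe_control}) comes in: as $h_{\mathrm{safe}}\to0^+$ the penalty $B_i$ blows up, up to $\epsilon_b$, and so does its gradient. The optimal policy is therefore pushed against $\nabla V_i^*$ into saturation, and the reconstruction errors $\bar\varepsilon_a,\bar\varepsilon_c$ are negligible relative to this. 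Agent $j$ acts symmetrically, so the principal term is at least $4\|p_{ij}\|\bar U_{\mathrm{NN},\max}$.

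If this alignment cannot be justified cleanly, I would state it as the operative interpretation of the saturation in Algorithm~\ref{alg:adp_safety}: near the boundary, the saturated output is directed along $p_{ij}$. Then, dropping $2\|v_{ij}\|^2\ge0$ and using $h_0\le R_{\rm sense}^2-D_s^2$, I would get
\begin{equation*}
\dot h_{\mathrm{safe}}\big|_{h_{\mathrm{safe}}\to0^+} \ge 4\|p_{ij}\|\Big(\bar U_{\mathrm{NN},\max}-\bar U_{\mathrm{nom}}-\bar\Theta-\bar N-\bar U_{\mathrm{other}}\Big)-\gamma^2(R_{\rm sense}^2-D_s^2).
\end{equation*}
Under (\ref{eq:adp_sufficient_bound}) the bracket is positive. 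Since $h_0>0$ gives $\|p_{ij}\|\ge D_s$, the first term is at least $4D_s(\cdot)>\gamma^2(R_{\rm sense}^2-D_s^2)$, which proves strict positivity. This explains the factor $4D_s$ in the denominator of (\ref{eq:adp_sufficient_bound}).

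Finally, I would note that agents enter $\mathcal S_i$ with $\|p_{ij}\|\approx R_{\rm sense}>D_s$, so they never enter at the boundary. As a result, the switching of the sensor set does not break the Nagumo argument.
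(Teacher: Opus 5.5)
Your route is the paper's, and so is its weak point. Everything hinges on $2p_{ij}^T(\hat u_{\mathrm{safe},i}-\hat u_{\mathrm{safe},j})=4\|p_{ij}\|\bar U_{\mathrm{NN},\max}$ on the boundary, and the theorem's hypotheses do not imply this. The paper asserts the same thing, via an unproved identity $(\nabla_{z_i}V_i^*)^TG_i=-2\nu p_{ij}^T$ and an argmax over the ball, so you inherit the gap rather than close it. Your HJB justification fails at several concrete points:
\begin{itemize}
\item By \eqref{eq:barrier_penalty}, $B_i\le|\mathcal{S}_i|\mu/\epsilon_b$, so nothing blows up as $h_{\mathrm{safe}}\to0^+$. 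Moreover, Assumption~\ref{as:nn_rigor} takes $V_i^*$ to be $C^1$ on a compact $\mathcal{Z}$, which makes $\nabla V_i^*$ bounded there and contradicts the blow-up.
\item Even granting $w_{\Sigma,i}\nabla_{z_{v,i}}V_i^*=-2\nu p_{ij}$, \eqref{eq:optimal_safe_control} gives $u^*_{\mathrm{safe},i}=\nu R_i^{-1}p_{ij}$. This is parallel to $p_{ij}$ only if $p_{ij}$ is an eigenvector of $R_i$. After saturation you only get $4\|p_{ij}\|\bar U_{\mathrm{NN},\max}$ multiplied by $\lambda_{\min}(R_i)/\lambda_{\max}(R_i)$.
\item The applied input is the learned $\mathrm{sat}(\hat W_{a,i}^Tz_i)$, not $u^*$. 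Its direction depends on all sensed agents and relative velocities through $z_i$, and on weights that Theorem~\ref{thm:composite_uub} only shows to be UUB. So neither alignment nor saturation at the boundary follows.
\item ``Agent $j$ acts symmetrically'' fails when $j$ is a leader, since Assumption~\ref{as:preliminary}(4) gives it no safety input. It also fails when $i$ or $j$ is simultaneously on the boundary with a third agent: one saturated vector cannot point along both of two non-parallel vectors $p_{ij}$ and $p_{ik}$.
\item $\bar U_{\mathrm{other}}$ in Assumption~\ref{as:bounded_dynamics} is defined for the per-pair repulsive sum \eqref{eq:u_repel}. It does not justify splitting the actor's single saturated output into a $j$-component and an ``other'' component.
\end{itemize}
Your fallback of stating the alignment as ``the operative interpretation'' adds a hypothesis. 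What you actually prove is the conditional statement: if $2p_{ij}^T(\hat u_{\mathrm{safe},i}-\hat u_{\mathrm{safe},j})\ge4\|p_{ij}\|\bar U_{\mathrm{NN},\max}$ whenever $h_{\mathrm{safe}}(x_i,x_j)=0$, then $\mathcal{C}$ is forward invariant.

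Granting that hypothesis, your remaining algebra is correct and in two places tighter than the paper's. First, keeping $\|p_{ij}\|$ rather than $R_{\rm sense}$ in the disturbance bound is what makes \eqref{eq:adp_sufficient_bound} sufficient. The paper bounds the disturbance by $4R_{\rm sense}(\bar U_{\mathrm{nom}}+\bar\Theta+\bar N+\bar U_{\mathrm{other}})$. Multiplying \eqref{eq:adp_sufficient_bound} by $4D_s$ only dominates $4D_s(\cdot)$, so with $R_{\rm sense}>D_s$ the paper's last inequality does not close. Second, your comparison-lemma step from $h_{\mathrm{safe}}\ge0$ to $h_0(t)\ge h_0(t_0)e^{-\gamma(t-t_0)}>0$ is needed and missing from the paper.

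Your sensor-set remark, however, controls the wrong function. A pair enters $\mathcal{S}_i$ with $h_0=R_{\rm sense}^2-D_s^2>0$. But $h_{\mathrm{safe}}=2p_{ij}^Tv_{ij}+\gamma h_0$ is negative at entry whenever the closing speed $-p_{ij}^Tv_{ij}/\|p_{ij}\|$ exceeds $\gamma(R_{\rm sense}^2-D_s^2)/(2R_{\rm sense})$. For such a pair the Nagumo argument on $\{h_{\mathrm{safe}}\ge0\}$ never applies, so you need an entry-speed bound or an extra hypothesis there.
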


\begin{proof}
	By Nagumo's Theorem \cite{nagumo1942lage}, $\mathcal{C}$ is forward invariant if $\dot{h}_{\mathrm{safe}}|_{h_{\mathrm{safe}} \to 0^+} > 0$. Let $p_{ij} \triangleq p_i - p_j$, $v_{ij} \triangleq v_i - v_j$. Applying HOCBF boundary conditions \cite{xiao2021high}, $2p_{ij}^T v_{ij} \to -\gamma h_0(x_i, x_j) \implies$
	\begin{equation}
		\dot{h}_{\mathrm{safe}}\Big|_{h_{\mathrm{safe}} \to 0^+} = 2\|v_{ij}\|^2 + 2p_{ij}^T \dot{v}_{ij} - \gamma^2 h_0(x_i, x_j).
		\label{eq:adp_h_dot_boundary}
	\end{equation}
	
	Substituting $\dot{v}_{ij} = (\hat{u}_{\mathrm{safe},i} - \hat{u}_{\mathrm{safe},j}) + \Delta_{ij}$, where $\Delta_{ij}$ consolidates all non-safety forces acting on the pair. Following the identical worst-case disturbance bounding procedure established in Theorem 2
	\begin{equation*}
    \begin{split}
        2p_{ij}^T \Delta_{ij} &\ge -4R_{\rm sense} \Big( \bar{U}_{\mathrm{nom}} + \bar{\Theta} + \bar{N} + \bar{U}_{\mathrm{other}} \Big), \\
        -\gamma^2 h_0(x_i, x_j) &\ge -\gamma^2(R_{\rm sense}^2 - D_s^2).
    \end{split}
\end{equation*}
	
	At the safety boundary ($h_{\mathrm{safe}} \to 0^+$), $B_i(z_i) \to \infty$ and $\|\nabla_{z_i} V_i^*\| \to \infty$. The optimal policy reduces strictly to the unconstrained linear inner product limit. By Lie derivative properties of control-affine systems \cite{khalil2002nonlinear}
	\begin{equation*}
    \begin{split}
        (\nabla_{z_i} V_i^*)^T G_i(x) &= -\nu(z_i) \left( \frac{\partial \dot{h}_{\mathrm{safe}}}{\partial u_{\mathrm{safe},i}} \right)^T \\
        &= -2\nu(z_i) p_{ij}^T, \quad \nu(z_i) > 0.
    \end{split}
\end{equation*}
	
	$\implies \hat{u}_{\mathrm{safe},i} = \arg\max_{\|u\| \le \bar{U}_{\mathrm{NN},\max}} (2p_{ij}^T u)$. By standard convex optimization principles for linear projections over a Euclidean ball \cite{boyd2004convex}, the optimal vectors lie strictly on the boundary
	\begin{equation*}
		\hat{u}_{\mathrm{safe},i} = \bar{U}_{\mathrm{NN},\max} \frac{p_{ij}}{\|p_{ij}\|}, \quad \hat{u}_{\mathrm{safe},j} = -\bar{U}_{\mathrm{NN},\max} \frac{p_{ij}}{\|p_{ij}\|}.
	\end{equation*}
	
	$\implies 2p_{ij}^T \big(\hat{u}_{\mathrm{safe},i} - \hat{u}_{\mathrm{safe},j}\big) = 4 \|p_{ij}\| \bar{U}_{\mathrm{NN},\max}$. Substituting bounds into the boundary derivative \eqref{eq:adp_h_dot_boundary} and noting $2\|v_{ij}\|^2 \ge 0$
    \begin{equation*}
    \begin{split}
        \dot{h}_{\mathrm{safe}}\Big|_{h_{\mathrm{safe}} \to 0^+} 
        &\ge 4\|p_{ij}\| \bar{U}_{\mathrm{NN},\max} \\
        &\quad - 4R_{\rm sense} \Big( \bar{U}_{\mathrm{nom}} + \bar{\Theta} + \bar{N} + \bar{U}_{\mathrm{other}} \Big) \\
        &\quad - \gamma^2(R_{\rm sense}^2 - D_s^2).
    \end{split}
    \end{equation*}
	
	Since physical safety requires $\|p_{ij}\| \ge D_s$, satisfying condition \eqref{eq:adp_sufficient_bound} guarantees $\dot{h}_{\mathrm{safe}}|_{h_{\mathrm{safe}} \to 0^+} > 0 \implies \mathcal{C}$ is forward invariant.
\end{proof}

\begin{theorem}[ADP Controller Stability]
	\label{thm:composite_uub}
	Consider the multi-agent system governed by the nonlinear dynamics with the barrier-modulated actor-critic control law. Suppose the assumptions regarding the nominal formation tracking, neural network approximations, and the Persistence of Excitation (PE) conditions hold. Let the initial states strictly reside within the safe set. Then, the composite closed-loop system comprising the nominal affine tracking and adaptive estimation errors, the critic network weight errors $\tilde{W}_{c,i}$, and the actor network weight errors $\tilde{W}_{a,i}$ is Uniformly Ultimately Bounded (UUB).
\end{theorem}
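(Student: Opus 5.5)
The proof will use a composite Lyapunov function that adds the critic and actor weight-error energies to the nominal tracking Lyapunov function from the proof of Theorem~\ref{thm:uub_barrier}. Define $\tilde{W}_{c,i} = W_{c,i}^* - \hat{W}_{c,i}$ and $\tilde{W}_{a,i} = W_{a,i}^* - \hat{W}_{a,i}$, and take
\begin{equation*}
\begin{split}
V_{\mathrm{tot}} &= \frac{1}{2a} s_f^T \bar{\Omega}_{ff}^{-1} s_f + \frac{1}{2a^2} e_f^T \bar{\Omega}_{ff}^{-1} e_f + \sum_{i \in \mathcal{V}_f}\Big( \frac{1}{2c_2}\tilde{\theta}_i^T\tilde{\theta}_i + \frac{1}{2c_1}\tilde{\gamma}_i^T\tilde{\gamma}_i \Big) \\
&\quad + \sum_{i \in \mathcal{V}_f}\Big( \frac{1}{2\eta_c}\tilde{W}_{c,i}^T\tilde{W}_{c,i} + \frac{1}{2\eta_a}\operatorname{tr}(\tilde{W}_{a,i}^T\tilde{W}_{a,i}) \Big).
\end{split}
\end{equation*}
Theorem~\ref{thm:adp_safety} guarantees that $\mathcal{C}$ is forward invariant, so the states stay in a region where $h_0>0$. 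I will then assume that $z_i$ remains in the compact set $\mathcal{Z}$ of Assumption~\ref{as:nn_rigor}. This makes the activations, their gradients and the reconstruction errors uniformly bounded along trajectories.

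\textbf{Tracking block.} Write $u_i = \rho_i u_{\mathrm{nom},i} + \hat{u}_{\mathrm{safe},i} + n_i$. Using the modified adaptive laws \eqref{eq:gamma_dot_adp}--\eqref{eq:theta_dot_adp}, the cross terms $\rho_i s_i^T\Phi_i\tilde{\theta}_i$ and $\rho_i\tilde{\gamma}_i^T|s_i|$ cancel exactly against the $\rho_i$-weighted nominal terms, as anticipated in Remark~\ref{rem:freezing}. What remains is $s_f^T$ times the perturbation $(\rho_i-1)(\text{nonadaptive part}) + \hat{u}_{\mathrm{safe},i} + n_i$. Both $\hat{u}_{\mathrm{safe},i}$ and $n_i$ are bounded, by the saturation $\bar{U}_{\mathrm{NN},\max}$ and by $\bar N$. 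The $(1-\rho_i)$ part is bounded by $\bar U_{\mathrm{nom}}$ through Assumption~\ref{as:bounded_dynamics}. Repeating the Young-inequality step and the choice of $\alpha$ from Theorem~\ref{thm:uub_barrier} then gives $\dot V_{\mathrm{track}} \le -\alpha\|Z_{\mathrm{track}}\|^2 + \bar\varepsilon'\|Z_{\mathrm{track}}\|$.

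\textbf{Critic block.} Substituting $V_i^* = W_{c,i}^{*T}\sigma_{c,i} + \varepsilon_{c,i}$ into the HJB equation \eqref{eq:hjb_safe} evaluated along the actual trajectory lets us write the Bellman error \eqref{eq:bellman_error} as
\begin{equation*}
e_{c,i} = -\tilde{W}_{c,i}^T\omega_i + \delta_i,
\end{equation*}
where $\delta_i$ collects three terms: the gap $\hat u^TR\hat u - u^{*T}Ru^*$ (bounded by a constant plus a term linear in $\|\tilde W_{a,i}\|$), the term $\nabla\varepsilon_{c,i}\dot z_i$, and the term $\alpha\varepsilon_{c,i}$. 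The critic law \eqref{eq:critic_update} then yields
\begin{equation*}
\dot V_c \le -\eta_c\,\tilde W_{c,i}^T\frac{\omega_i\omega_i^T}{(1+\omega_i^T\omega_i)^2}\tilde W_{c,i} + \|\tilde W_{c,i}\|\,(c_1 + c_2\|\tilde W_{a,i}\|).
\end{equation*}

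\textbf{Actor block.} Writing $u_{\mathrm{target}}$ as $u^*$ plus a term linear in $\tilde W_{c,i}$ plus a term from $\nabla\varepsilon_c$, the actor error becomes $e_{a,i} = -\tilde W_{a,i}^T\sigma_{a,i} + K_i\tilde W_{c,i} + \varepsilon$-terms. The actor law \eqref{eq:actor_update} then gives a matching negative quadratic in $\tilde W_{a,i}$ weighted by $\sigma_a\sigma_a^T/(1+\sigma_a^T\sigma_a)$, plus cross terms.

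\textbf{Main obstacle.} The instantaneous quadratic forms are only positive semidefinite, and Assumption~\ref{as:pe_condition} gives positivity only over windows of length $T$. I will handle this in the standard way for PE-based arguments. The homogeneous part $\dot{\tilde W}_{c} = -\eta_c \bar\omega\bar\omega^T\tilde W_c$, with $\bar\omega = \omega/(1+\omega^T\omega)$, is uniformly exponentially stable under PE. Its converse Lyapunov function $P_c(t,\tilde W_c)$ has quadratic bounds that depend on $\beta_1$, $\beta_2$ and $T$. Rather than using the plain squared norms, I will substitute $P_c$ and the analogous actor function $P_a$ into $V_{\mathrm{tot}}$. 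Each block then contributes a term $-\kappa_c\|\tilde W_c\|^2$ or $-\kappa_a\|\tilde W_a\|^2$.

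The remaining cross terms couple the weight errors with each other and with $Z_{\mathrm{track}}$; the coupling to $Z_{\mathrm{track}}$ enters only through the bounded $\dot z_i$ inside $\omega_i$. I will bound these with Young's inequality, choosing $\eta_c$ and $\eta_a$ (or the Young parameters) so that the composite matrix stays positive definite. With $\mathcal{Z}_{\mathrm{aug}} = [Z_{\mathrm{track}}^T, \tilde W_c^T, \operatorname{vec}(\tilde W_a)^T]^T$, the result is
\begin{equation*}
\dot V_{\mathrm{tot}} \le -\lambda\|\mathcal{Z}_{\mathrm{aug}}\|^2 + b\|\mathcal{Z}_{\mathrm{aug}}\| + c.
\end{equation*}
The decaying noise of Remark~\ref{rem:probing_noise} only shrinks $b$ and $c$ over time. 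It follows that $\dot V_{\mathrm{tot}}<0$ outside a ball of radius $(b+\sqrt{b^2+4\lambda c})/(2\lambda)$. Khalil's UUB theorem then concludes the proof, provided this ball lies inside $\mathcal{Z}$, which closes the loop on the compactness assumption.
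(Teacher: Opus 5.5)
Your route differs from the paper's exactly where the paper is loosest. The paper keeps plain quadratic weight-error terms and asserts decoupled bounds $\tilde W_{c,i}^T\dot{\tilde W}_{c,i} \le -\frac{\eta_c}{2}\tilde W_{c,i}^T\Omega_{c,i}\tilde W_{c,i} + \frac{\eta_c}{2}\bar\varepsilon_{H,i}^2$ (and the actor analogue) with constant residuals. It then turns the rank-one terms into rates $\beta_c=\eta_c\beta_1/(2T)$, $\beta_a=\eta_a\beta_3/(2T)$ by integrating over a PE window, which bounds an averaged derivative rather than giving a genuine Lyapunov decrease. For the tracking part it carries over the Theorem~\ref{thm:uub_barrier} bound, although the adaptive laws are now \eqref{eq:gamma_dot_adp}--\eqref{eq:theta_dot_adp}. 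Your approach improves on all three points. Converse Lyapunov functions for the uniformly exponentially stable homogeneous weight dynamics are the legitimate way to use Assumption~\ref{as:pe_condition}. Your $\rho_i$-weighted cancellation matches the modified laws. Your decompositions of $e_{c,i}$ and $e_{a,i}$ expose the critic--actor coupling that the paper's constants $\bar\varepsilon_{H,i}$, $\bar\varepsilon_{a,i}$ silently absorb; for instance, the term $K_i\tilde W_{c,i}$ in $e_{a,i}$ is not a priori bounded.

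The gap is in how you close the weight blocks. First, your $\delta_i$ is incomplete. The HJB holds along $F_i+G_iu^*_{\mathrm{safe},i}$, not along the actual $\dot z_i$, so $\delta_i$ also contains $(\nabla_{z_i}V_i^*)^TG_i(\hat u_{\mathrm{safe},i}+n_i-u^*_{\mathrm{safe},i})$. Using $G_i^T\nabla_{z_i}V_i^*=-2R_iu^*_{\mathrm{safe},i}$, this combines with your quadratic gap into $(\hat u_{\mathrm{safe},i}-u^*_{\mathrm{safe},i})^TR_i(\hat u_{\mathrm{safe},i}-u^*_{\mathrm{safe},i})-2u^{*T}_{\mathrm{safe},i}R_in_i$. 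For the unsaturated actor output this is quadratic, not linear, in $\tilde W_{a,i}$. Second, even with your linear bound, positive definiteness cannot be obtained by choosing $\eta_c$, $\eta_a$ or Young parameters. Each learning rate multiplies both the stabilizing term and the coupling input of its own block, e.g.\ $\dot{\tilde W}_{c,i}=-\eta_c\Omega_{c,i}\tilde W_{c,i}+\eta_c\,\omega_i\delta_i/(1+\omega_i^T\omega_i)^2$. After normalizing by $P_c$, $P_a$, the cross gains are therefore set essentially by $\beta_1,\beta_3,T$ and the bounds on $\nabla\sigma_{c,i}$, $R_i^{-1}$, $w_{\Sigma,i}$, not by the learning rates. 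With coupling in both directions, the quadratic part of $\frac{d}{dt}(P_c+\mu P_a)$ is negative definite for some $\mu>0$ only if the product of the two normalized cross gains is below one, and that small-gain condition is not among the hypotheses.

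The repair is to use the saturation you already invoke for the tracking block. With $\|\hat u_{\mathrm{safe},i}\|\le\bar U_{\mathrm{NN},\max}$ and $u^*_{\mathrm{safe},i}$ bounded on $\mathcal Z$, $\delta_i$ is bounded independently of $\tilde W_{a,i}$, so the weight dynamics become a cascade critic $\to$ actor. A small $\mu$ then dominates the $K_i\tilde W_{c,i}$ term; keep $e_{a,i}$ built from the unsaturated output as in \eqref{eq:actor_error} so the actor's homogeneous part stays linear.

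Separately, your tracking block has its own gap. The $\rho_i$-weighted cancellation leaves $+(1-\rho_i)\|s_i\|^2$, so the damping on $s_f$ drops from $1$ to $\rho_i$. Positive definiteness of $I-a^{-2}\bar\Omega_{ff}^{-1}$ from Theorem~\ref{thm:uub_barrier} is then lost near the boundary. Assumption~\ref{as:bounded_dynamics} bounds $\|u_{\mathrm{nom},i}\|$, not $\|s_i\|$, so you need an explicit bound on $s_i$ or a positive lower bound on $\rho_i$ to turn this into a term linear in $\|Z_{\mathrm{track}}\|$.
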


\begin{proof}
	Let $\tilde{W}_{c,i} \triangleq W_{c,i}^* - \hat{W}_{c,i}$ and $\tilde{W}_{a,i} \triangleq W_{a,i}^* - \hat{W}_{a,i}$. Since $W_{c,i}^*, W_{a,i}^*$ are constant, $\dot{\tilde{W}}_{c,i} = -\dot{\hat{W}}_{c,i}$ and $\dot{\tilde{W}}_{a,i} = -\dot{\hat{W}}_{a,i}$. Consider the composite Lyapunov candidate
	\begin{equation*}
		V_{\mathrm{total}} = V_{\mathrm{nom}}(Z_{\mathrm{track}}) + \sum_{i \in \mathcal{V}_f} \left( \frac{1}{2} \tilde{W}_{c,i}^T \tilde{W}_{c,i} + \frac{1}{2} \operatorname{tr}(\tilde{W}_{a,i}^T \tilde{W}_{a,i}) \right).
	\end{equation*}
	
	The total control $u_i = u_{\mathrm{nom},i} + \varepsilon_{\mathrm{nom},i}$, where $\varepsilon_{\mathrm{nom},i} \triangleq (\rho_i - 1) u_{\mathrm{nom},i} + \hat{u}_{\mathrm{safe},i} + n_i(t)$. Boundedness of its components ensures $\|\varepsilon_{\mathrm{nom}}\| \le \bar{\varepsilon}_{\mathrm{nom}}$. Evaluating $\dot{V}_{\mathrm{nom}}$ yields $\dot{V}_{\mathrm{nom}} \le -\alpha\|Z_{\mathrm{track}}\|^2 + Z_{\mathrm{track}}^T \varepsilon_{\mathrm{nom}}$. Applying standard robust control bounding for non-vanishing perturbations \cite{khalil2002nonlinear}
	\begin{equation*}
		\dot{V}_{\mathrm{nom}} \le -\frac{\alpha}{2}\|Z_{\mathrm{track}}\|^2 + \frac{\bar{\varepsilon}_{\mathrm{nom}}^2}{2\alpha}.
	\end{equation*}
	
	Substituting the normalized gradient descent update laws for the actor and critic networks, and applying Young's inequality to the residual approximation errors \cite{vamvoudakis2010online}, the weight error derivatives satisfy
	\begin{align*}
		\tilde{W}_{c,i}^T \dot{\tilde{W}}_{c,i} &\le -\frac{\eta_c}{2} \tilde{W}_{c,i}^T \Omega_{c,i} \tilde{W}_{c,i} + \frac{\eta_c}{2} \bar{\varepsilon}_{H,i}^2, \\
		\operatorname{tr}(\tilde{W}_{a,i}^T \dot{\tilde{W}}_{a,i}) &\le -\frac{\eta_a}{2} \operatorname{tr}(\tilde{W}_{a,i}^T \Omega_{a,i} \tilde{W}_{a,i}) + \frac{\eta_a}{2} \bar{\varepsilon}_{a,i}^2,
	\end{align*}
	where $\Omega_{c,i} \triangleq \frac{\omega_i \omega_i^T}{(1 + \omega_i^T \omega_i)^2}$ and $\Omega_{a,i} \triangleq \frac{\sigma_{a,i} \sigma_{a,i}^T}{1 + \sigma_{a,i}^T \sigma_{a,i}}$ are the normalized regressor matrices.
	
	Combining these bounds into $\dot{V}_{\mathrm{total}}$ yields
	\begin{equation*}
    \begin{split}
        \dot{V}_{\mathrm{total}} &\le -\frac{\alpha}{2} \|Z_{\mathrm{track}}\|^2 \\
        &\quad - \sum_{i \in \mathcal{V}_f} \left( \frac{\eta_c}{2} \tilde{W}_{c,i}^T \Omega_{c,i} \tilde{W}_{c,i} + \frac{\eta_a}{2} \operatorname{tr}(\tilde{W}_{a,i}^T \Omega_{a,i} \tilde{W}_{a,i}) \right) \\
        &\quad + C_{\mathrm{total}},
    \end{split}
    \end{equation*}
	where $C_{\mathrm{total}} \triangleq \frac{\bar{\varepsilon}_{\mathrm{nom}}^2}{2\alpha} + \sum_{i \in \mathcal{V}_f} \left( \frac{\eta_c}{2} \bar{\varepsilon}_{H,i}^2 + \frac{\eta_a}{2} \bar{\varepsilon}_{a,i}^2 \right)$.
	
	While $\Omega_{c,i}$ and $\Omega_{a,i}$ are instantaneous rank-1 matrices, Assumption \ref{as:pe_condition} guarantees Persistent Excitation. By the Lyapunov extension theorem for persistently excited systems \cite{khalil2002nonlinear}, integrating $\dot{V}_{\mathrm{total}}$ over the window $[t, t+T]$ yields strictly positive effective decay rates $\beta_c \triangleq \frac{\eta_c \beta_1}{2T}$ and $\beta_a \triangleq \frac{\eta_a \beta_3}{2T}$
	\begin{equation*}
    \begin{split}
        \dot{\bar{V}}_{\mathrm{total}} &\le -\frac{\alpha}{2} \|Z_{\mathrm{track}}\|^2 \\
        &\quad - \sum_{i \in \mathcal{V}_f} \left( \beta_c \|\tilde{W}_{c,i}\|^2 + \beta_a \|\tilde{W}_{a,i}\|_F^2 \right) + C_{\mathrm{total}}.
    \end{split}
    \end{equation*}
	
	Consequently, $\dot{\bar{V}}_{\mathrm{total}} < 0$ strictly holds whenever:
	\begin{equation*}
    \begin{split}
        \|Z_{\mathrm{track}}\| &> \sqrt{\frac{2 C_{\mathrm{total}}}{\alpha}}, \\
        \|\tilde{W}_{c,i}\| &> \sqrt{\frac{C_{\mathrm{total}}}{\beta_c}}, \\
        \text{or} \quad \|\tilde{W}_{a,i}\|_F &> \sqrt{\frac{C_{\mathrm{total}}}{\beta_a}}.
    \end{split}
    \end{equation*}
	
	$\implies$ The composite system state $Z_{\mathrm{track}}$, $\tilde{W}_{c,i}$, and $\tilde{W}_{a,i}$ is Uniformly Ultimately Bounded (UUB).
\end{proof}

\section{Simulation}

We consider the practical scenario where UAVs are subject to air resistance disturbances \cite{qi2019modeling}. When a UAV is flying at high speeds, the air resistance is proportional to the square of its velocity, which can be modeled by $K v_i^2$. Therefore, we define the regressor matrix $\Phi_i$ as a diagonal matrix whose diagonal elements are given by $-v_{i,h}^2$, for $h \in \{1, 2, 3\}$. Consequently, $\theta_i$ encompasses the unknown parameter information, such as the air density, drag coefficient, and other aerodynamic factors.

We simulate a multi-agent system operating in a three-dimensional space consisting $9$ agents. The agent group is partitioned into $4$ leaders and $5$ followers. The nominal formation is shown in Fig.\ref{fig:dart_formation} and the stress matrix $\omega$ is given in \eqref{eq:stress_matrix}. The critic activation vector $\sigma_{c,i}(z_i)$ is specifically structured to consist of all unique quadratic monomials of the danger state $z_i$. The elements of the regressor vector are saturated such that $\|\omega_i\|_{\infty} \le \bar{\omega}_{\max}$, and the network weights are clamped via a projection operator to a known compact domain $\Omega_c = \{ \hat{W}_{c,i} \mid \|\hat{W}_{c,i}\|_{\infty} \le \bar{W}_{c,\max} \}$. The actor activation function is selected as the identity mapping, such that $\sigma_{a,i}(z_i) \equiv z_i$. The initial weights corresponding to the position component of the danger state ($z_{p,i}$) are initialized as $\hat{W}_{a,i}(0) = [k_{\mathrm{init}} I_d, \; \mathbf{0}_{d \times d}]^T$, where $k_{\mathrm{init}} > 0$. To maintain physical feasibility and prevent numerical instability during severe conflict scenarios, the output of the actor network is explicitly saturated such that $\|\hat{u}_{\mathrm{safe},i}\|_{\infty} \le \bar{U}_{\mathrm{NN},\max}$. Furthermore, a soft weight projection algorithm is applied to the adaptation law \eqref{eq:actor_update}, clamping the actor weights to the valid parameter space $\mathcal{W}_a = \{ \hat{W}_{a,i} \mid \|\hat{W}_{a,i}\|_{\infty} \le \bar{W}_{a,\max} \}$. If an element of $\hat{W}_{a,i}$ reaches the boundary and the derivative $\dot{\hat{W}}_{a,i}$ points outward, that element's update rate is strictly set to zero.

For demonstrating the performance of the proposed controllers an extreme unsafe situation is evaluated, where the leaders intentionally collapse to a single point. The 3D spatial trajectories for the barrier-gradient and ADP controllers are illustrated in Fig. \ref{fig:3D_gradient} and Fig. \ref{fig:3D_adp}, respectively. As shown in Fig. \ref{fig:min_distance}, the safety controllers actively maintains the minimum allowable distance between the follower-related pairs. Fig.\ref{fig:tracking_error} and Fig.\ref{fig:control_inputs} show the tracking error and controls input through the maneuvers. Finally, Fig. \ref{fig:critic_weights} and Fig. \ref{fig:actor_weights} shows the convergence of neural network weights.

\begin{table*}[t]
\normalsize
\begin{equation}
    \Omega = \left[
    \begin{array}{rrrrrrrrr}
        0.5928 & -0.0920 & -0.3519 & -0.0934 & -0.3519 &  0.1475 &  0.0000 &  0.1489 &  0.0000 \\
       -0.0920 &  0.1343 & -0.0423 &  0.0000 &  0.0460 & -0.1343 &  0.0883 &  0.0000 &  0.0000 \\
       -0.3519 & -0.0423 &  0.5234 &  0.0467 &  0.0000 &  0.0000 & -0.3052 & -0.0889 &  0.2182 \\
       -0.0934 &  0.0000 &  0.0467 &  0.1344 & -0.0410 &  0.0000 &  0.0000 & -0.1344 &  0.0877 \\
       -0.3519 &  0.0460 &  0.0000 & -0.0410 &  0.5228 & -0.0870 &  0.2170 &  0.0000 & -0.3059 \\
        0.1475 & -0.1343 &  0.0000 &  0.0000 & -0.0870 &  0.3657 & -0.3052 &  0.2315 & -0.2182 \\
        0.0000 &  0.0883 & -0.3052 &  0.0000 &  0.2170 & -0.3052 &  0.5222 & -0.2170 &  0.0000 \\
        0.1489 &  0.0000 & -0.0889 & -0.1344 &  0.0000 &  0.2315 & -0.2170 &  0.3658 & -0.3059 \\
        0.0000 &  0.0000 &  0.2182 &  0.0877 & -0.3059 & -0.2182 &  0.0000 & -0.3059 &  0.5241 
    \end{array}
    \right].
    \label{eq:stress_matrix}
\end{equation}
\vspace*{4pt}
\end{table*}

\begin{figure}[t]
    \centering
    
    \begin{minipage}{0.55\linewidth}
        \centering
        \includegraphics[trim=2cm 2cm 2cm 2cm, clip, width=\linewidth]{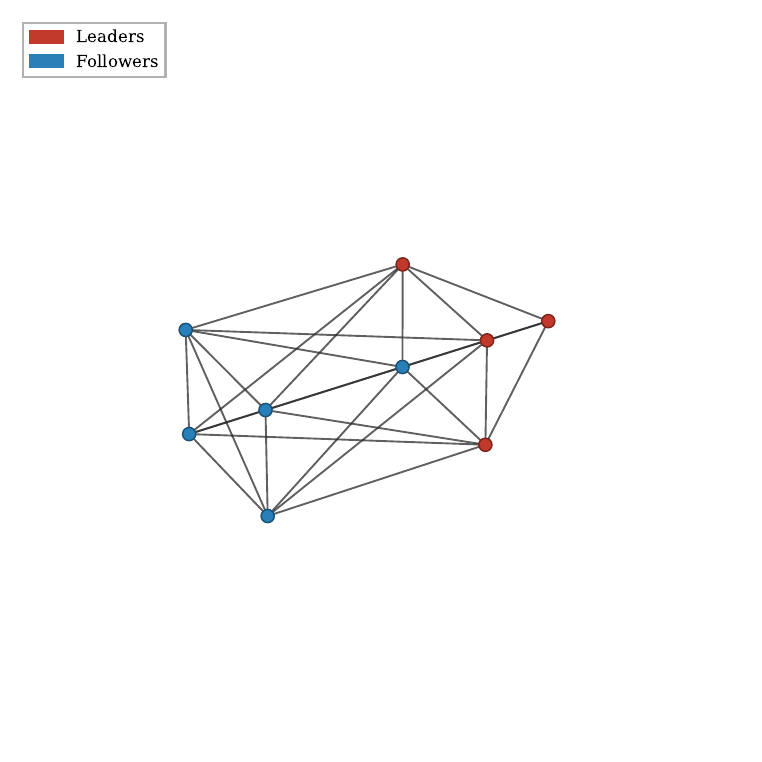} 
    \end{minipage}\hfill
    \begin{minipage}{0.4\linewidth}
        \begin{equation*}
            r = \begin{bsmallmatrix} 
                3.0 & 0.0 & 0.0 \\
                0.0 & 2.0 & 2.0 \\
                0.0 & -2.0 & 2.0 \\
                0.0 & -2.0 & -2.0 \\
                0.0 & 2.0 & -2.0 \\
                -6.0 & 2.0 & 2.0 \\
                -6.0 & -2.0 & 2.0 \\
                -6.0 & -2.0 & -2.0 \\
                -6.0 & 2.0 & -2.0
            \end{bsmallmatrix}      
            \label{eq:nominal_config}
        \end{equation*}
    \end{minipage}
    
    \vspace{1em}
    \caption{The nominal 3D affine dart formation consisting of 4 leaders (red) and 5 followers (blue). The black lines denote the 25 active edges. $r$ is the nominal configuration}
    \label{fig:dart_formation}
\end{figure}

\begin{figure}[t]
    \centering
    \includegraphics[width=\linewidth]{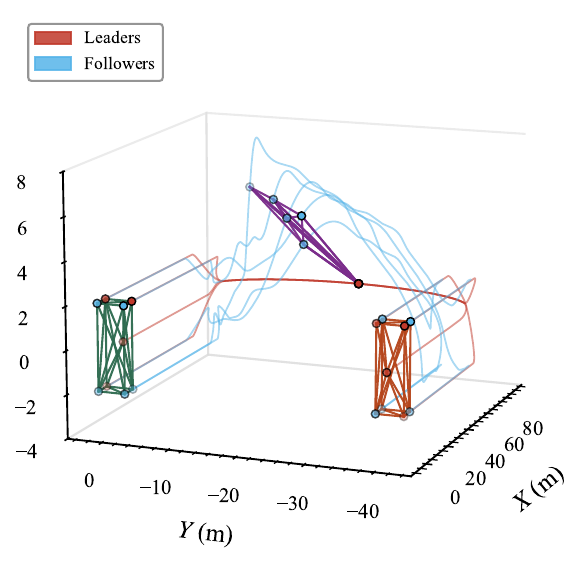}
    \caption{The 3D trajectory for the barrier-gradient based controller.}
    \label{fig:3D_gradient}
\end{figure}

\begin{figure}[t]
    \centering
    \includegraphics[width=\linewidth]{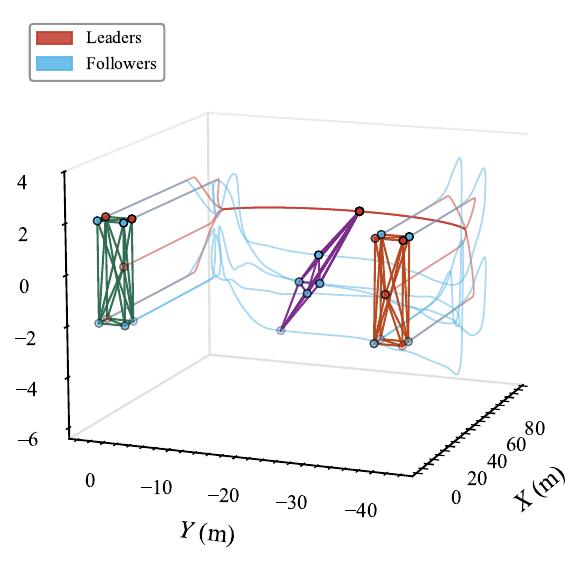}
    \caption{The 3D trajectory for the adp based controller.}
    \label{fig:3D_adp}
\end{figure}

\begin{figure}[t]
    \centering
    \includegraphics[width=\linewidth]{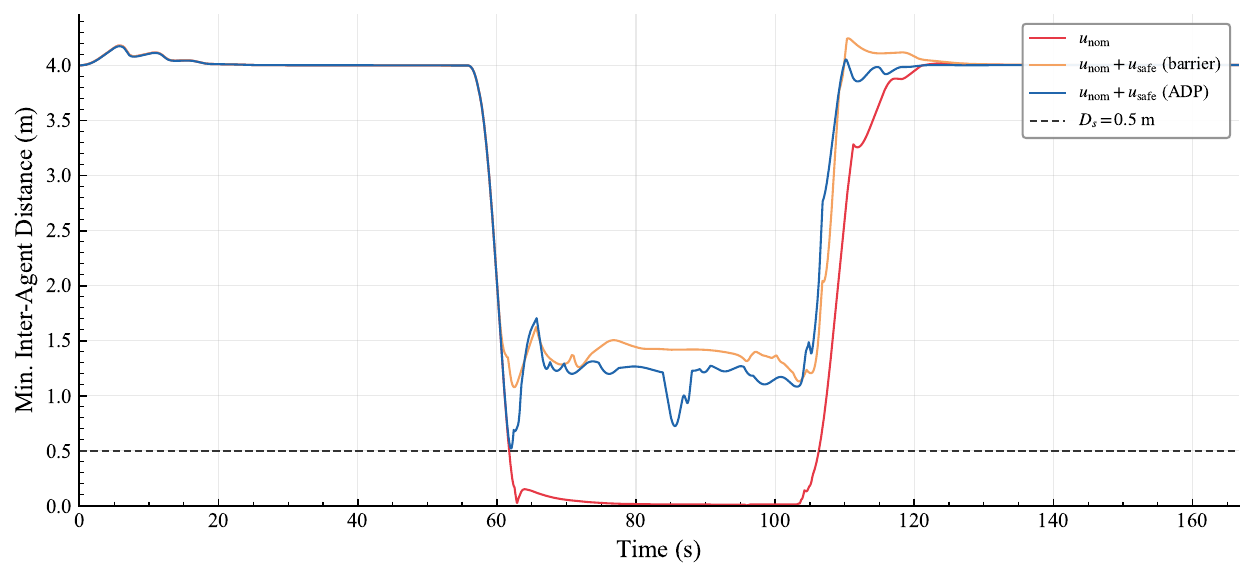}
    \caption{Comparison of minimum distance between follower related pairs.}
    \label{fig:min_distance}
\end{figure}

\begin{figure}[t]
    \centering
    \includegraphics[width=\linewidth]{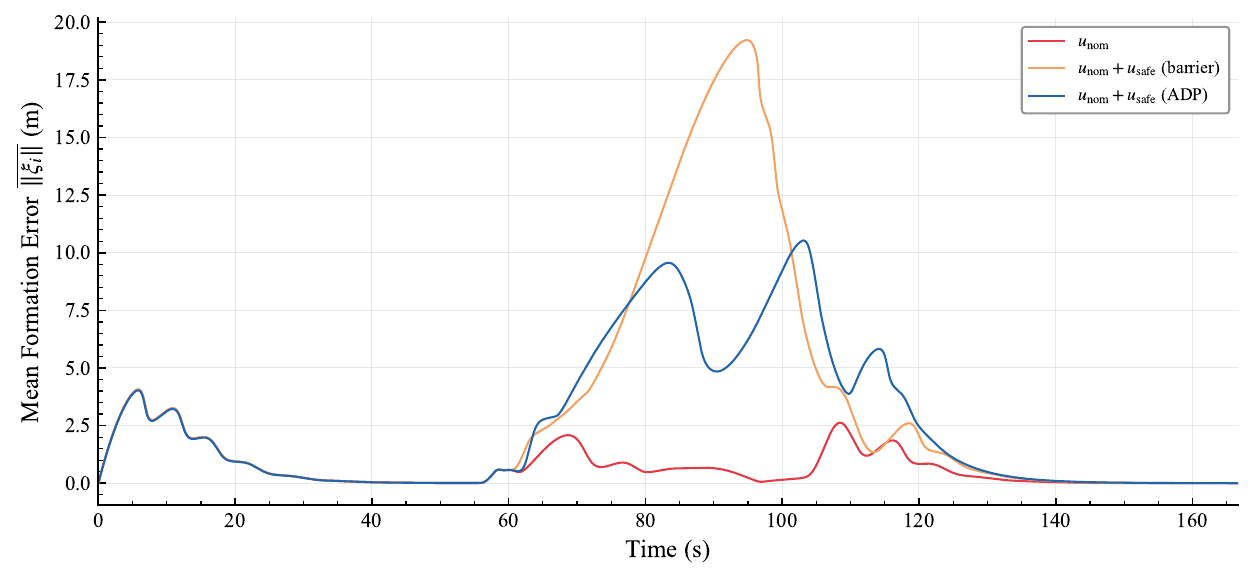}
    \caption{Comparison of tracking error.}
    \label{fig:tracking_error}
\end{figure}

\begin{figure}[t]
    \centering
    \includegraphics[width=\linewidth]{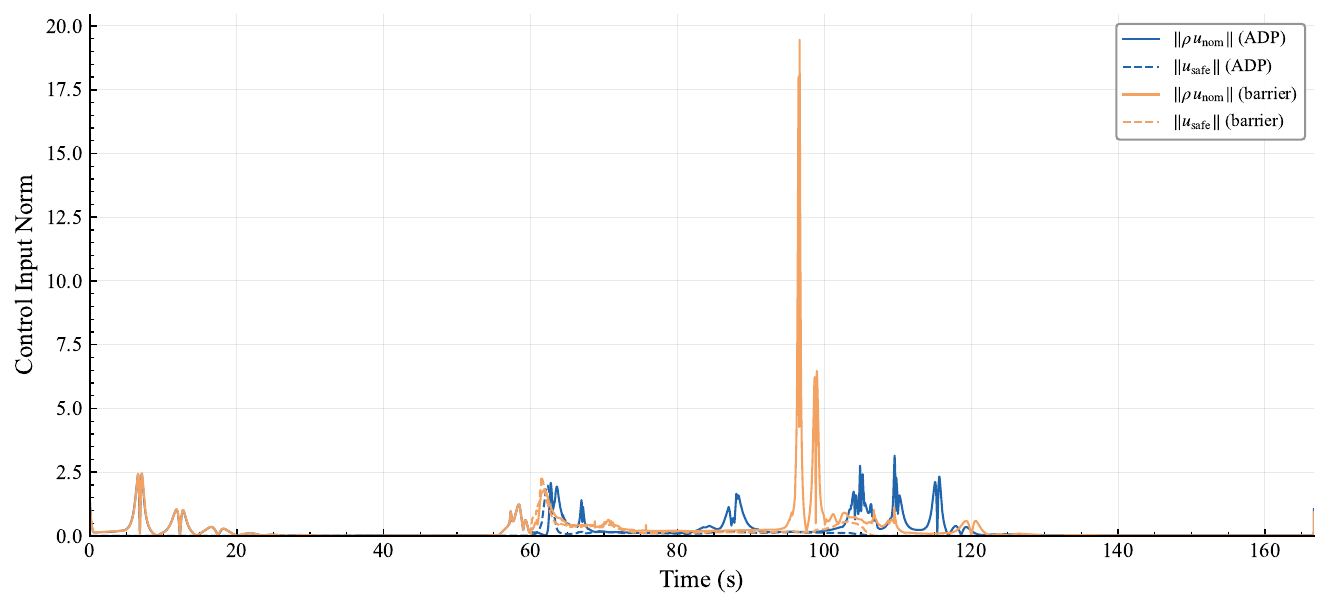}
    \caption{Comparison of control inputs.}
    \label{fig:control_inputs}
\end{figure}

\begin{figure}[t]
    \centering
    \includegraphics[width=\linewidth]{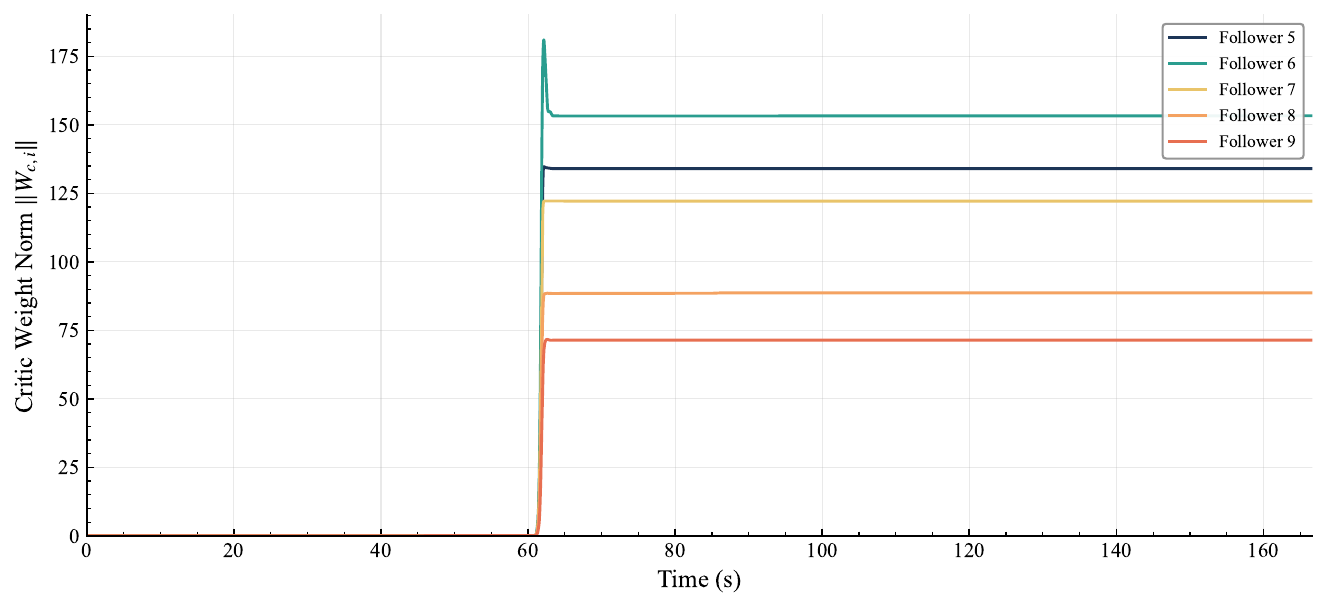}
    \caption{Critic weights convergence.}
    \label{fig:critic_weights}
\end{figure}

\begin{figure}[t]
    \centering
    \includegraphics[width=\linewidth]{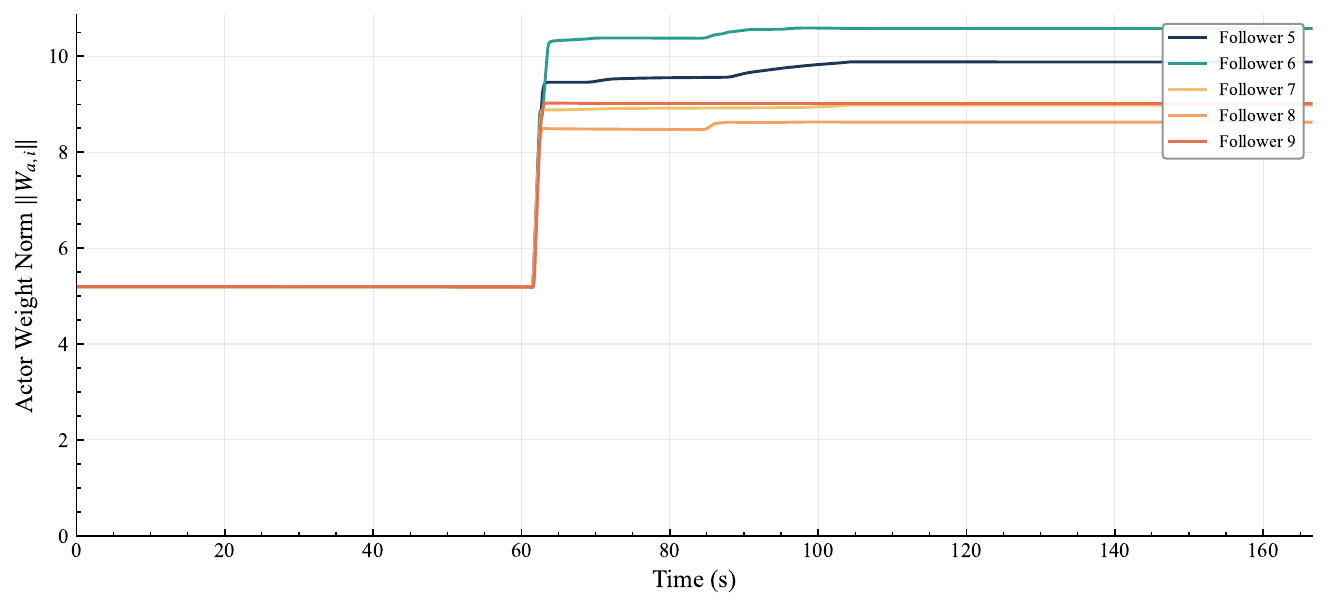}
    \caption{Actor weights convergence.}
    \label{fig:actor_weights}
\end{figure}

\section{Conclusion}
This paper investigated the problem of safe affine formation control for second-order multi-agent systems subject to parametric uncertainties. To achieve simultaneous formation tracking and collision avoidance, a novel barrier-modulated control architecture is proposed. By integrating Higher-Order Control Barrier Functions (HOCBFs), the framework smoothly prioritizes safety over nominal tracking near critical boundaries. Two distinct safety controllers are developed and analyzed, a computationally efficient analytical barrier-gradient controller and an optimal data-driven adaptive dynamic programming (ADP) controller. The ADP approach leveraged an actor-critic neural network to learn optimal safety policies online without requiring exact system dynamics. Rigorous theoretical analysis utilizing Nagumo's theorem and Lyapunov stability demonstrated that both proposed controllers guarantee the forward invariance of the safe set and ensure that all closed-loop signals, including tracking and estimation errors, remain Uniformly Ultimately Bounded (UUB). Finally, simulation results confirmed the validity, robustness, and performance of the proposed methods in maintaining safe affine formations during complex maneuvers. Future work will focus on extending this framework to directed communication topologies and the generalized nonlinear control affine system.

\clearpage
\bibliographystyle{IEEEtran}
\bibliography{references}

\end{document}